\newtheorem{prop}{Proposition}
\newtheorem{theorem}{Theorem}
\newtheorem{lemma}{Lemma}
\def\E{\mathcal{E}}
\def\D{\mathcal{D}}
\DeclareMathOperator*{\lsum}{\scalerel*{\oplus}{\textstyle\sum}}
\tikzset{degil/.style={
                decoration={markings,
                mark= at position 0.5 with {
                \node[transform shape] (tempnode) {$\backslash$};
                }
                },
                postaction={decorate}
}
}
\numberwithin{equation}{section} 
\numberwithin{figure}{section} 
\numberwithin{table}{section} 
\def\D{\mathcal{D}}
\def\M{\mathcal{M}}
\def\R{\mathbb{R}}
\def\V{\mathcal{V}}
\def\E{\mathcal{E}}
\def\p{\mathfrak{p}}
\def\<{\langle}
\def\>{\rangle}
\def\x{\mathbf{x}}
\def\Ebar{\bar{E}}
\definecolor{red}{rgb}{0.9,0,0}
\definecolor{green}{rgb}{0,0.8,0}
\definecolor{blue}{rgb}{0,0,0.8}
\definecolor{cautionred}{rgb}{1.0,0,0}
\newcommand{\Sum}{\sum}
\newcommand{\bea}{\begin{eqnarray}}
\newcommand{\eea}{\end{eqnarray}}
\newcommand{\beann}{\begin{eqnarray*}}
\newcommand{\eeann}{\end{eqnarray*}}
\newcommand{\beq}{\begin{equation}}
\newcommand{\eeq}{\end{equation}}
\newcommand{\ba}{\begin{array}}
\newcommand{\ea}{\end{array}}
\newcommand{\ben}{\begin{enumerate}}
\newcommand{\een}{\end{enumerate}}
\theoremstyle{definition}
\theoremstyle{definition}
\theoremstyle{definition}
\theoremstyle{definition}
\theoremstyle{definition}
\theoremstyle{definition}
\theoremstyle{remark}
\theoremstyle{definition}
\begin{document}

\title{Two Forms of Inconsistency in Quantum Foundations}
 \author{Jeremy Steeger\\
  \texttt{jsteeger@nd.edu}
  \and
  Nicholas Teh\\
  \texttt{nteh@nd.edu}
  }
\maketitle
\begin{abstract}
   
   Recently, there has been some discussion of how Dutch Book arguments might be used to demonstrate the rational incoherence of certain hidden variable models of quantum theory (\cite{Feintzeig2014, Feintzeig2017, Wronski2017}).
   In this paper, we argue that the `form of inconsistency' underlying this alleged irrationality is deeply and comprehensively related to the more familiar `inconsistency' phenomenon of contextuality. Our main result is that the hierarchy of contextuality due to Abramsky and Brandenburger (\citeyear{AB}) corresponds to a hierarchy of additivity/convexity-violations which yields formal Dutch Books of different strengths. 
   We then use this result to provide a partial assessment of whether these formal Dutch Books can be given a convincing normative interpretation. 
\end{abstract}

\section{Introduction}

Within the philosophical foundations of quantum theory, there is a long tradition of attempting to argue that the predictions of quantum theory cannot be reproduced by hidden variable theories that assume a classical structure of events, probabilities of events, states of affairs and functional relationships relating these items (call these \textbf{CHVTs}). 
Very roughly speaking, such `no-go' arguments take the following form:
\begin{quote}
\textbf{(Standard)} There exists a quantum model whose empirical probabilities are inconsistent with the probabilities that can be reproduced by the relevant CHVT.\footnote{We will give more detail in Sections 2 and 3 about the class of CHVTs that we wish to consider. In brief: we will only consider \textit{noncontextual} `factorizable CHVTs' in the sense of (\cite[Section 8]{AB}); thus, (Standard) arguments do not apply to Bohmian mechanics, which is both contextual and non-local.}
\end{quote}
For example, the GHZ, Kochen-Specker, Hardy, and Bell quantum models all provide instances of (Standard) arguments.

The development of this tradition has since resulted in a veritable cornucopia of models witnessing such inconsistencies with classical intuitions.
Thus, one important avenue of contemporary research (\cite{AB, Spekkens2011, Sainz2012}) seeks to impose order on this landscape by uncovering what we shall refer to as `forms of inconsistency', i.e. high-level structures that provide a unifying explanation---and perhaps even a classification---of how these models give rise to such inconsistencies.
The goal of this paper is to argue for a thoroughgoing and novel connection between two forms of inconsistency in quantum foundations, and to apply this result to several philosophical puzzles.

The first form, which has a long and distinguished pedigree, is the \textit{Gluing Inconsistency} that is suggested by `contextuality arguments' against CHVTs, as exemplified by the Kochen-Specker theorem (\cite{Kochen1975}). 
Its structural essence has recently been rigorously articulated by Abramsky and Brandenburger (henceforth \textbf{AB}), who not only generalize it to apply to a large class of physical theories, but also provide a three-tiered `hierarchy of contextuality' which classifies the ways in which physical models can deviate from the classical probability assumed by a CHVT.

The second form, which we call \textit{Dutch Bookability}, is somewhat more recent and thus less explored; indeed, much of the work of this paper will be to show that it really is a `form of inconsistency' in our sense.
In its simplest incarnation, it arises in (\cite{Feintzeig2014}) and (\cite{Feintzeig2017}) and appears to differ from the first form in two ways. First, it is not presented as arising from an argument against a CHVT, but rather against a \textit{Generalized HVT} \textbf{(GHVT)}, by which we mean a hidden variable theory based on a weakening of classical probability theory, in a sense that we will make precise below.
Second, the `inconsistency' in question stems from the fact that the GHVT is `Dutch Bookable' in the sense that it violates a certain formal constraint. Feintzeig and Fletcher (\citeyear{Feintzeig2017}) then make a move that has become standard in philosophical discussions concerning \textit{classical} probability: they interpret this constraint normatively, i.e. as a rational constraint on the expectations of an epistemic agent, and thus take its violation to provide a no-go argument against a subjective interpretation of the probabilities of GHVTs.\footnote{Feintzeig and Fletcher also argue that violating the formal constraint associated with `Dutch Books' is pathological even when the probabilities in question are given a frequentist interpretation, but we will not address this argument here.}
For the sake of clarity, we will distinguish between the formal constraint and its normative intepretation by using \textit{`Dutch Bookability' to refer to the violation of the formal constraint}, and `normative Dutch Bookability' to refer to the normative interpretation of this violation.

Despite these apparent differences, we will argue that Gluing Inconsistency and Dutch Bookability are really two ways of understanding the same set of facts which fall under (Standard), because each form allows us to represent the empirical probabilities of a physical model and to classify the ways in which certain HVTs fail to reproduce these probabilities. 
Our argument culminates in the result (Theorem \ref{thm:main}) that, for a large class of physical models (including quantum models), AB's three-tiered hierarchy of Gluing Inconsistency corresponds to a three-tiered hierarchy of Dutch Bookability, which provides the aforementioned classification.

This result yields powerful applications to two puzzles raised by the recent exploration of the relationship between Dutch Books and quantum theory.
First, the only connection between GHVT models of quantum theory and Dutch Bookability that has appeared in the literature (\cite{Feintzeig2014,Feintzeig2017}) stems from a `finite null cover'---an incredibly strong (indeed maximal!) violation of `subadditivity' which leads to Dutch Books. 
What explanation can be given of the strength of this violation?
Our main result allows us to answer this question by explaining the strength of the violation in terms of AB's three-tiered classification of the contextuality exhibited by quantum (and non-quantum) models: the `maximal Dutch Bookability' of the above violation derives from its being the strongest form of contextuality (exemplified in quantum theory by Kochen-Specker models), but there are also quantum models which exhibit weaker forms of Dutch Bookability corresponding to the second and third tier respectively, viz. the Hardy model (\cite{Hardy1993}) and the Bell model (\cite{Bell1964}). 

Second, what room is there to interpret Dutch Bookability normatively, i.e. as a no-go argument against a subjective approach to the probabilities of noncontextual GHVTs?
This question is difficult to fully assess because the relevant hidden variable theories have not been clearly defined in the literature. Nonetheless, we will suggest the normative force of Dutch Bookability hinges on an approach to GHVTs that borrows an interpretive strategy from classical probability theory---and this approach is at the very least controversial.


The paper is structured as follows.
First, Section 2 provides a non-technical introduction to our topic that is accessible to the general reader. Section 2.1 reviews the notion of `contextuality' that is central to this paper. Section 2.2 introduces the first form of inconsistency, viz. AB's Gluing Inconsistency, and Section 2.3 introduces the second form of inconsistency, viz. (formal) Dutch Bookability. Section 2.3 then gives a conceptual explanation of the main results of our paper and their motivation. At this point, the general reader will have sufficient background to skip straight ahead to the conclusion (Section 5).

Sections 3 and 4 are aimed more at specialist readers in the philosophy of physics and formal epistemology. However, we stress that, even here, technical prerequisites have been kept to an absolute minimum and our arguments are accessible to a reader with only knowledge of elementary set theory. Thus, a general reader who is willing to work through the details should be able to follow the gist of our arguments. 

Section 3.1 provides an elementary review of AB's highly influential framework for contextuality, which demonstrates that contextuality comes in different strengths or `tiers' of a hierarchy. To the best of our knowledge, this is the first such treatment aimed at philosophers. 
Section 3.2 then introduces one of our novel contributions, viz. the construction of a translation manual from AB's framework to the `weak probability spaces' that are used in order to formulate Dutch Bookability. The details of this translation manual are subtle and themselves of conceptual significance, since they describe how to give a quasi-probabilistic interpretation to one of the deepest features of quantum theory, viz. contextuality.

Section 4 draws on the material of Section 3 to prove our main result, viz. the correspondence between AB's hierarchy for contextuality and a hierarchy of (formal) Dutch Bookability. 
We begin by situating our argument with respect to previous work on a hierarchy of Dutch Bookability that was conducted within a purely asbtract setting, i.e. that did not relate this hierarchy to physical models (and quantum theory in particular). 
By contrast, Section 4.1 relates the AB hierarchy---and its connection with \textit{physical models}---to a hierarchy of Dutch Bookability in terms of `additivity-violation', and Section 4.2 does the same for a hierarchy of Dutch Bookability in terms of `convexity-violation'. We note that the formal epistemology community will be especially interested in the results of Section 4.2, since the standard modern approach to Dutch Books is by means of convex analysis.\footnote{This approach goes back at least to de Finetti (\citeyear{deFinetti-F:1937,deFinetti-ToP:1974}) and can be seen more recently in Paris (\citeyear{Paris2001}), K\"{u}hr and Mundici (\citeyear{Kuhr2007}), Williams (\citeyear{Williams2012}), and Bradley (\citeyear{Bradley2016}).}


Finally, Section 5 addresses the question of how formal Dutch Bookability may be interpreted normatively, i.e. as a no-go against a subjective interpretation of the probabilities of GHVTs. We first offer one plausible reconstruction of Feintzeig and Fletcher's normative Dutch Book argument, and we show that the soundness of this reconstructed argument hinges on a certain choice of response functions, i.e. probabilities of events conditional on certain states of affairs. We then note that a GHVT need not specify response functions in this way.

\section{Contextuality and Inconsistency}\label{sec:con_in}

\subsection{Contextuality}\label{subsec:context}

The subject of `contextuality' was introduced into quantum foundations in (\cite{Bell1966}) and (\cite{Kochen1975}), after which it has become common to characterize quantum theory as `contextual'.
In order to explain what this means, we now review several preliminary definitions.
First, a (measurement) \textit{context} is a set of `co-measurable' or `compatible' measurements, i.e. measurements that can be made jointly (thus, in the particular case of quantum theory, a `context' refers to a set of commuting observables).
Second, a \textit{noncontextual} \textbf{(NC)} CHVT is one whose `response functions' (i.e. functions that specify the probabilities of measurement events conditional on the system's being in some particular state) are independent of any information about contexts.\footnote{We note that Spekkens (\citeyear{Spekkens2005}) generalizes the above definitions of `context' and `noncontextual CHVT' to apply to contexts of preparations and transformations of systems as well; however, we will only be considering measurement contexts in this paper.}
Third, we define a \textit{possibility assignment} as an assignment of either $1$ or $0$ to events. Following Spekkens (\citeyear{Spekkens2005}), we will use the term \textit{outcome-deterministic} \textbf{(OD)} to refer to an CHVT whose response functions only make possibility assignments.

The characterization of quantum theory as `contextual' then arises through the following `quantum contextuality argument', which is a special case of (Standard): 
there exists a quantum model whose empirical possibilities are inconsistent with the response functions of an (NC, OD) CHVT.
We say that quantum theory is `contextual' because such quantum models exist.

We now discuss how quantum contextuality suggests a geometric form of inconsistency that we shall call `Gluing Inconsistency'. 
One classic way of demonstrating quantum contextuality is by means of a (discrete) Kochen-Specker model, i.e. a quantum model for which `measurements' correspond to (a finite set of) projectors, `maximal contexts' are given by sets of mutually orthogonal projectors that resolve to the identity, and the empirical possibilities are assumed to satisfy certain functional constraints.\footnote{In particular, the assignment $h: \mathcal{P(H)}\to\{0,1\}$ of possibilities must be normative ($h(I)=1$) and additive ($h\left (\bigvee_i P_i\right)=\sum_i h(P_i)$ for $\{P_i\}$ mutually orthogonal).}
The inconsistency between the empirical probabilities of a Kochen-Specker model and those generated by a (NC) possibilistic CHVT can then be rendered geometrically, i.e. as the impossibility of consistently `gluing together' orthonormal bases (corresponding to the maximal contexts) with the following possibility assignments: each orthonormal basis has exactly one basis vector that is assigned the possibility $1$ and all other basis vectors are assigned the possibility $0$. 
For instance, an impossible gluing scenario that corresponds to the Kochen-Specker model in (\cite{Cabello1996}) is depicted in Fig. \ref{fig:Cabello}, where the vertices represent possible measurement events, the rectangles represent maximal contexts, and the vertices are colored black for possibility $1$, and white for possibility $0$.

\begin{figure}[!htb]
    \centering
    \begin{tikzpicture}[scale=0.50]
        \newdimen\R
        \newdimen\nodesize
        \R=2.7 cm
        \nodesize=2 pt
        
        \foreach \x/\l/\p in
         {60/a/above,120/b/above, 180/c/left,240/d/below,300/e/below,360/f/right}
         \coordinate [label=\p: ] (\l) at (\x:\R);
        
        \foreach \x/\l/\p in {a/b/ab,b/c/bc, c/d/cd,d/e/de,e/f/ef,f/a/fa}
        \path[name path global/.expanded=\p] (\x)--(\l);
        
        \foreach \x/\l in
         {80/abde1,100/abde2, 140/bcef1,160/bcef2,200/cdfa1,220/cdfa2}
         \path[name path global/.expanded=\l] (\x:-\R)--(\x:\R);
        
        \foreach \x/\l/\p in
         {ab/abde1/ab1,ab/abde2/ab2,
         de/abde1/de1,de/abde2/de2,
         bc/bcef1/bc1,bc/bcef2/bc2,
         ef/bcef1/ef1,ef/bcef2/ef2,
         cd/cdfa1/cd1,cd/cdfa2/cd2,
         fa/cdfa1/fa1,fa/cdfa2/fa2} \path [name intersections/.expanded = {of = {\x} and \l, by = \p}];
        
        \foreach \x in
         {a, b, c, d, e, f, ab1, ab2, bc1, bc2, cd1, cd2, de1, de2, ef1, ef2, fa1, fa2}
         \draw (\x) circle (\nodesize);
         
         \node[draw,fit=(a) (b)] {};
         \node[draw,rotate fit=-30, fit=(b) (c)] {};
         \node[draw,rotate fit=30, fit=(c) (d)] {};
         \node[draw,rotate fit=0, fit=(d) (e)] {};
         \node[draw,rotate fit=-30, fit=(e) (f)] {};
         \node[draw,rotate fit=30, fit=(f) (a)] {};
         \node[draw,rotate fit=90, fit=(bc2) (cd1) (ef2) (fa1)] {};
         \node[draw,rotate fit=-30, fit=(ab1) (fa2) (cd2) (de1)] {};
         \node[thick,draw,rotate fit=30, fit=(ab2) (bc1) (de2) (ef1)] {};
         
         \draw[black,fill=black] (e) circle (\nodesize);
         \draw[black,fill=black] (fa2) circle (\nodesize);
         \draw[black,fill=black] (b) circle (\nodesize);
         \draw[black,fill=black] (cd1) circle (\nodesize);
        
    \end{tikzpicture}
    \caption{An inconsistent possibility assignment for Cabello's Kochen-Specker model.}
    \label{fig:Cabello}
\end{figure}
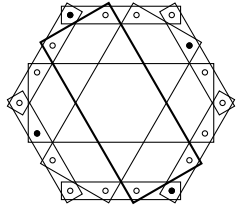

Such models present us with a compelling image of inconsistency which admits of an intuitive parallel with the global geometric inconsistency of `impossible figures' such as Penrose's Triangle (\cite{Penrose1992}) and Escher's staircase.
As we will see in Section 3.1, AB's account of Gluing Inconsistency provides both a precisification and a generalization of this core geometric intuition.




\subsection{Gluing Inconsistency}

Although the geometric intuition underlying Gluing Inconsistency has long been implicit in the quantum contextuality tradition, it was not until relatively recently that Abramsky and Brandenburger \textbf{(AB)} used a rigorous framework---viz. sheaf theory---to distill its essence and generalize it far beyond the confines of quantum contextuality.  

The generalization offered by AB's framework is one that we shall refer to as \textit{theory-independence}, because it allows us to consider models of a large class of theories, including \textit{non-quantum} theories.
Theory-independence can be motivated by noting that (Standard) arguments function by demonstrating an inconsistency between the `structure of the empirical probabilities'---which happen to be generated by a quantum model---on the one hand, and a certain noncontextual CHVT, on the other.
This observation suggests that one might be able to devise theory-independent versions of such arguments, in which the empirical probabilities do not necessarily come from quantum theory.

In order to do so, AB introduce a formal generalization of `the empirical probabilities of a quantum model', which they call an `empirical model'.
Roughly speaking, an empirical model is a family of (classical) probability distributions that satisfies a generalized no-signaling condition (cf. Section 3.1 for the details).
\textit{Gluing Inconsistency} is then defined as the failure of such a family of distributions to `glue together' to form a global/joint distribution. 
Furthermore, since AB show that the existence of such joint distributions is equivalent to the existence of a certain CHVT, this `gluing failure' can be understood as a measure of the extent to which an empirical model deviates from classical probability.\footnote{As AB note, their work thus extends arguments for quantum models along these lines that date back at least to Fine (\citeyear{Fine1982a}).}

One of the most attractive features of AB's framework is that it clearly picks out certain features of empirical models which characterize the degree to which the model exhibits Gluing Inconsistency; henceforth, we will use the term `contextual' to describe models with such features (we will continue to use the term `quantum contextuality' to describe when quantum models deviate from classical probability).
AB (\citeyear{AB}) were thus able to provide the first comprehensive classification of contextuality in terms of a three-tiered hierarchy, where a higher tier (a stronger failure of gluing) implies a lower tier (a weaker failure of gluing), but not vice versa.
As we will soon see, the top and middle tier of the hierarchy involve only empirical \textit{possibilities}, whereas the lowest tier of this hierarchy is fully \textit{probabilistic} in its characterization.
In particular, the GHZ model and all Kochen-Specker models fall into the top tier, the Hardy model falls into the middle tier (but not the top), and the Bell model falls into the lowest tier (but not the middle).\footnote{For the reader familiar with (\cite{Feintzeig2014}), we briefly note the heuristic relationship of this hierarchy with Feintzeig's hierarchy of theorems `Kochen-Specker $\Rightarrow$ Theorem 6 $\Rightarrow$ Pitowsky's Theorem $\Rightarrow$ Bell's Theorem'. The statement that `Kochen-Specker $\Rightarrow$ Theorem 6' concerns top-tier contextuality and additivity/convexity violations, and the statement that `Pitowsky's Theorem $\Rightarrow$ Bell's Theorem' concerns lowest-tier contextuality and additivity/convexity violations (we will detail these tiers, as well as the middle tier, below).}
In sum, the `AB hierarchy' encompasses all known quantum models that witness (Standard), and classifies the different strengths with which they deviate from classical probability.




\subsection{Dutch Bookability}

While the notion of Gluing Inconsistency has a straightforward relationship with the quantum contextuality tradition, the same cannot be said for our second form of inconsistency, viz. Dutch Bookability.
But a relationship is present nonetheless, and is most clearly seen by dividing an account of Dutch Bookability into two parts, viz. (i) the partial algebra approach; and (ii) the Dutch connection.
\\\\
\emph{(i) The partial algebra approach} 

The key formal notion that Feintzeig and Fletcher (\citeyear{Feintzeig2017}) use to establish a link between Dutch Books and quantum foundations is that of a Weak Probability Space (WPS), i.e. a generalization of a measure space to a setting in which the `set of events' $\Sigma$ is not required to have any algebraic structure (see Section 3.2 for the precise definition of a WPS) and the `probability function' $\mu: \Sigma \rightarrow [0,1]$ is a mere set function and not a measure. 
However, their actual use of a WPS to represent quantum models still requires \textit{some} algebraic structure, because \textit{co-measurable} WPS events (i.e. those that arise from measurements within the same context) are required to form an algebra. 
Thus, these WPS representations possess a \textit{partial} algebraic structure in which certain algebraic relations model the co-measurability of events stemming from compatible measurements, and the absence of these algebraic relations models the non-co-measurability of events stemming from incompatible measurements---we shall call this the `partial algebra approach'.
We stress that the WPSs with partial algebraic structure need not have anything to do with generalizations of hidden variable theories! In fact, we submit that in order to achieve complete conceptual clarity on the relation between (formal) Dutch Bookability and contextuality, it is best to simply think of a WPS as a formal device for \textit{representing} quantum theory, akin to AB's use of `empirical models'. Furthermore, only after getting clear on this relation will one be in a position to assess whether it makes sense to interpret WPSs as a kind of hidden variable theory.

The partial algebra approach has a long history within quantum foundations that is not motivated by Dutch Books, but rather by the recognition that (because of non-commuting observables) quantum models will in general give rise to measurements that fall into distinct maximal contexts; in other words, partial algebras are meant to capture the very structure that gives rise to `quantum contextuality'.
This use of partial algebras goes at least as far back as the original Kochen-Specker paper (\citeyear{Kochen1975}), which employs the notion of a partial Boolean algebra for this purpose.
Since then, the partial algebra approach (and its associated measure theory) has been developed in various ways.
One strand (\cite{Staton2015, Heunen2010, Roumen2016}) focuses on developing the approach in a purely algebraic manner, wherein events do not have to be realized by sets. This work has resulted in elegant (indeed, category-theoretic) constructions and powerful characterization results. 

A different strand attempts to cleave very closely to the standard notion of a measure space, in which events are realized by sets.
As such, it has to reckon with minutiae such as first specifying an ambient set $Y$ that serves as the `sample space' in order to specify the set of events (some subset of the power-set of $Y$) that carries the partial structure. One attempt to work this out in detail is provided by the notion of a `generalized measure space' in (\cite{Gudder1973}). 
Since a WPS is a weakening of this notion, it is helpful to situate it within this strand of the partial algebra approach.
\\\\
\emph{(ii) The Dutch connection}

The connection between Dutch Books and the partial algebra approach has an antecedent in the work of Fine (\citeyear{Fine1982a}), who can be read as suggesting that one might be able to avoid (Standard) arguments against CHVTs if one generalizes these CHVTs to account for non-co-measurability by appealing to the partial algebraic framework for probability. We will use the term \textit{Generalized HVT} (GHVT) to refer to a particular implementation of this suggestion, viz. one that takes the relevant framework to be given by a WPS. As understood in the literature (and in order to reasonably count as a hidden variable theory) a GHVT at the very least needs to include a specification of events, probabilities of such events, and states of affairs. The first two are taken to be $\Sigma$ and $\mu$ of the WPS, respectively, but very little attention has been paid to how `states of affairs' are to be specified. We shall return to this point in Section 5.

The connection between WPSs and Dutch Books emerges when Feintzeig (\citeyear{Feintzeig2014}) and Feintzeig and Fletcher (\citeyear{Feintzeig2017}) attempt to mount a Dutch Book argument against the GHVT implementation of Fine's suggestion.
We describe this in two steps. First, they argue for the following claim:
\begin{quote} \textbf{(Dutch)}
There exists a quantum model $Q$ such that any WPS that reproduces $Q$'s empirical probabilities will violate a formal constraint, which in turn shows that it will be susceptible to a (formal) Dutch Book.
\end{quote}
Regardless of whether we are discussing quantum or non-quantum models, we will refer to this type of \textit{formal} constraint-violation as \textit{Dutch Bookability}. We emphasize that this formal constraint-violation can be understood \textit{without} either interpreting a WPS as a hidden variable theory or taking a subjective approach to probability, which is how we shall treat it in the rest of this paper, with the exception of Section 5.  

Second, they interpret the Dutch Book normatively, thereby taking it to provide a successful no-go argument against `noncontextual' GHVTs of a certain kind. 
It would be interesting to provide enough detail to this argument so that one can evaluate it (e.g. how should one understand `states of affairs' for such a GHVT, and what does `noncontextual' mean in this context?). 
But although we take up this subject in Section 5, our main goal is to answer a far more fundamental question concerning (Dutch), whose answer informs our assessment of the normative interpretation of Dutch Bookability. 

\subsection{Our motivation and results}

This more fundamental question is: what is the precise relationship between (Dutch) and (Standard), as well as their theory-independent generalizations?
This question has been left unanswered in the literature, although partial clues indicate that some relationship certainly exists.
For instance, (\cite{Feintzeig2014, Feintzeig2017}) argue by contradiction to show that the Kochen-Specker theorem implies that there exist quantum models which are formally Dutch Bookable. 
However, this leaves unresolved the more general question of how (Standard)---which includes non-Kochen-Specker models such as the Hardy and Bell models---is related to (Dutch), as well as the task of providing a direct and explicit account of this relationship. 

A further motivation for pursuing our topic is to provide an explanation of why the particular kind of Dutch Bookability that appears in (\cite{Feintzeig2014,Feintzeig2017}) stems from such a dramatic violation of additivity, in which the sample space is covered by a finite number of measure zero sets. Why should the Dutch Books of quantum theory be so extreme? 
In fact, the existence of a correspondence between (Dutch) and (Standard) should lead one to expect that not all such Dutch Books are so extreme, because---intuitively---not all (Standard) models are as `extreme' as Kochen-Specker models in their deviation from classical probability.

Our strategy for gaining cognitive control over these issues and resolving them is to harness the power of AB's framework for contextuality by constructing a `salient/physical feature-preserving' map taking any model from AB's framework to appropriate models in the purely formal WPS framework in which Dutch Bookability has been articulated.
The use of AB's framework is particularly illuminating in this regard because (as we will see in Section 3.1) (i) all quantum models and many non-quantum models can be represented within it; and (ii) it provides an abstract definition of contextuality and classifies contextual models into three different tiers, viz.---going from strongest to weakest---strong, logical, and probabilistic.
Thus, if one succeeds in constructing an adequate map from AB's framework to WPSs, one immediately obtains the WPS version of (i) and (ii).

Roughly speaking, our main result (see Theorem 1 and Proposition 1) can be summarized as saying that the WPS version of contextuality just is Dutch Bookability! 
In particular, our map sends strongly contextual empirical models (such as the Kochen-Specker model) to WPSs that maximally violate subadditivity, logically contextual empirical models (such as the Hardy model) to WPSs that violate subadditivity, and probabilistically contextual empirical models (such as the Bell model) to WPSs that violate additivity.
This result then allows us to resolve the puzzle about why the Dutch Bookability that appears in (\cite{Feintzeig2014, Feintzeig2017}) is so dramatic: Kochen-Specker models are not the only kind of quantum models giving rise to Dutch Books; other (Standard) models such as the Hardy and Bell models do so as well. However, Kochen-Specker models result in such a dramatic form of Dutch Bookability precisely because they witness the strongest form of contextuality. 
The details of these results are carefully explained in the next two sections of the paper, but the reader who is only interested in the broader conceptual story can now skip ahead to Section 5, where we discuss Feintzeig and Fletcher's normative interpretation of (formal) Dutch Bookability as a no-go argument against GHVTs.

\section{Representing empirical models}

In Section 3.1, we provide an elementary account of AB's framework for Gluing Inconsistency that is accessible to readers with basic knowledge of set theory (like AB, we will only discuss finite sets).
We also use a diagrammatic visualization (see Figure 3.1) to convey why AB's framework is so illuminating for our purposes, viz. it classifies various physical models (including quantum models) into a hierarchy of different strengths of contextuality.  
Section 3.2 introduces the purely formal framework of Weak Probability Spaces (WPSs) that will be used to formulate Dutch Bookability, and explains how the salient features of AB's framework can be translated into those of a WPS. 
In other words, we are setting up the machinery that is required for Section 4 to demonstrate that Dutch Bookability is the WPS version of AB's contextuality. 

\subsection{Gluing Inconsistency: Empirical models}\label{sec:gluing_formal}


The first notion that we will need to set up AB's framework is that of  an \emph{empirical scenario}, viz. a triple $(X,\mathcal{M},O)$, where $X$ is a set of measurements, $\mathcal{M}$ is the set of maximal contexts in the power-set $P(X)$, and $O$ is the set of possible outcomes for each measurement (recall that a `context' is a set of co-measurable/compatible measurements). We will also find it useful to refer to the set of all (possibly non-maximal) contexts, denoted $\M'$. 
Empirical scenarios can be represented by `bundle diagrams': e.g. in Fig. \ref{fig:bundles}, vertices of the `base' represent $X = \{ a , b, a', b' \}$, edges of the `base' represent $\M = \{ \{a,b\}, \{b, a'\}, \{a', b'\}, \{b', a\}\}$, and vertices over the `base' represent $O=\{0, 1\}$.

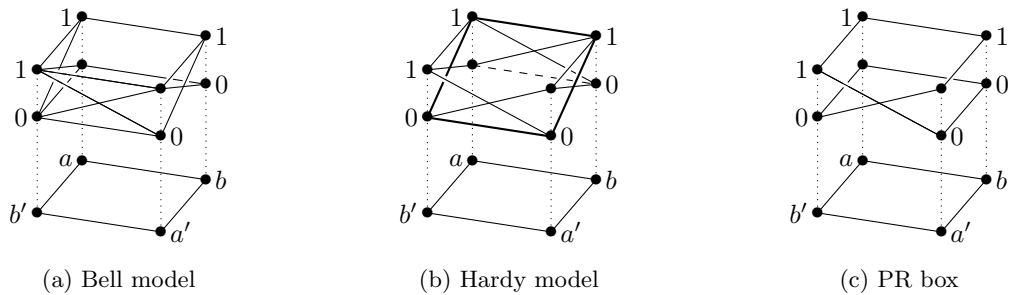
\begin{figure}[!htb]
    \centering
    \begin{subfigure}{.3\textwidth}
      \centering
        \tdplotsetmaincoords{65}{110}
        \begin{tikzpicture}[tdplot_main_coords,scale=0.7]
        \pgfmathsetmacro{\factor}{2.5};
        \pgfmathsetmacro{\inter}{1.5};
        \coordinate [label=left:$a$] (a) at (0*\factor,0*\factor,0);
        \coordinate [label=right:$b$] (b) at (0*\factor,1*\factor,0);
        \coordinate [label=left:$b'$] (b') at (1*\factor,0*\factor,0);
        \coordinate [label=right:$a'$] (a') at (1*\factor,1*\factor,0);
        
        \coordinate [label=left:$1$] (a1) at (0*\factor,0*\factor,3);
        \coordinate [label=right:$1$] (b1) at (0*\factor,1*\factor,3);
        \coordinate [label=left:$1$] (b'1) at (1*\factor,0*\factor,3);
        \coordinate [label=right:] (a'1) at (1*\factor,1*\factor,3);
        
        \coordinate [label=left:] (a0) at (0*\factor,0*\factor,2);
        \coordinate [label=right:$ 0$] (b0) at (0*\factor,1*\factor,2);
        \coordinate [label=left:$ 0$] (b'0) at (1*\factor,0*\factor,2);
        \coordinate [label=right:$ 0$] (a'0) at (1*\factor,1*\factor,2);
        
        \foreach \Point in {(a),(b),(a'),(b'),(a0),(b0),(a'0),(b'0),(a1),(b1),(a'1),(b'1)}{
            \node at \Point {\textbullet};
        }
        
        \draw[-, opacity=1] (a)--(b)--(a')--(b')--cycle;
        
        \draw[dotted, opacity=1] (a)--(a0)--(a1);
        \draw[dotted, opacity=1] (b)--(b0)--(b1);
        \draw[dotted, opacity=1] (a')--(a'0)--(a'1);
        \draw[dotted, opacity=1] (b')--(b'0)--(b'1);
        
        \draw[-, opacity=1, name path = a0b0] (a0)--(b0);
        \draw[-, opacity=1] (a1)--(b1);
        
        \draw[-, opacity=1, name path = a0b'0] (a0)--(b'0);
        \draw[-, opacity=1] (a0)--(b'1);
        \draw[-, opacity=1, name path = a1b'0] (a1)--(b'0);
        \draw[-, opacity=1] (a1)--(b'1);
        
        \draw[-, opacity=0, name path = a'0b1] (a'0)--(b1);
        \draw[-, opacity=1] (a'1)--(b0);
        \draw[-, opacity=0, name path = a'1b1] (a'1)--(b1);
        
        \path [name intersections={of=a'1b1 and a0b0,by=intera}];
        \path [name intersections={of=a'0b1 and a0b0,by=interb}];
        \filldraw [white] (intera) circle (\inter pt);
        \filldraw [white] (interb) circle (\inter pt);
        \draw[-] (a'1)--(b1);
        \draw[-] (a'0)--(b1);
        
        \draw[-, opacity=1] (a'0)--(b'0);
        \draw[-, opacity=1, name path = a'0b'1] (a'0)--(b'1);
        \draw[-, opacity=1, name path = a'1b'0] (a'1)--(b'0);
        \draw[-, opacity=1, name path = a'1b'1] (a'1)--(b'1);
        
        \path [name intersections={of=a0b'0 and a'1b'1,by=inter1}];
        \path [name intersections={of=a0b'0 and a'0b'1,by=inter2}];
        \path [name intersections={of=a1b'0 and a'1b'1,by=inter3}];
        \path [name intersections={of=a1b'0 and a'0b'1,by=inter4}];
        \filldraw [white] (inter1) circle (\inter pt);
        \filldraw [white] (inter2) circle (\inter pt);
        \filldraw [white] (inter3) circle (\inter pt);
        \filldraw [white] (inter4) circle (\inter pt);
        \draw[-, opacity=1] (a'1)--(b'1);
        \draw[-, opacity=1] (a'0)--(b'1);
        \end{tikzpicture}
      \caption{Bell model}
      \label{fig:Bell}
    \end{subfigure}%
    ~
    \begin{subfigure}{.3\textwidth}
      \centering
        \tdplotsetmaincoords{65}{110}
        \begin{tikzpicture}[tdplot_main_coords,scale=0.7]
        \pgfmathsetmacro{\factor}{2.5};
        \pgfmathsetmacro{\inter}{1.5};
        \coordinate [label=left:$a$] (a) at (0*\factor,0*\factor,0);
        \coordinate [label=right:$b$] (b) at (0*\factor,1*\factor,0);
        \coordinate [label=left:$b'$] (b') at (1*\factor,0*\factor,0);
        \coordinate [label=right:$a'$] (a') at (1*\factor,1*\factor,0);
        
        \coordinate [label=left:$1$] (a1) at (0*\factor,0*\factor,3);
        \coordinate [label=right:$1$] (b1) at (0*\factor,1*\factor,3);
        \coordinate [label=left:$1$] (b'1) at (1*\factor,0*\factor,3);
        \coordinate [label=right:] (a'1) at (1*\factor,1*\factor,3);
        
        \coordinate [label=left:] (a0) at (0*\factor,0*\factor,2);
        \coordinate [label=right:$ 0$] (b0) at (0*\factor,1*\factor,2);
        \coordinate [label=left:$ 0$] (b'0) at (1*\factor,0*\factor,2);
        \coordinate [label=right:$ 0$] (a'0) at (1*\factor,1*\factor,2);
        
        \foreach \Point in {(a),(b),(a'),(b'),(a0),(b0),(a'0),(b'0),(a1),(b1),(a'1),(b'1)}{
            \node at \Point {\textbullet};
        }
        
        \draw[-, opacity=1] (a)--(b)--(a')--(b')--cycle;
        
        \draw[dotted, opacity=1] (a)--(a0)--(a1);
        \draw[dotted, opacity=1] (b)--(b0)--(b1);
        \draw[dotted, opacity=1] (a')--(a'0)--(a'1);
        \draw[dotted, opacity=1] (b')--(b'0)--(b'1);
        
        \draw[dashed, opacity=1, name path = a0b0] (a0)--(b0);
        \draw[-, opacity=1, name path = a0b1] (a0)--(b1);
        \draw[-, opacity=1, name path = a1b0] (a1)--(b0);
        \draw[thick, opacity=1, name path = a1b1] (a1)--(b1);
        
        \draw[-, opacity=1] (a0)--(b'1);
        \draw[thick, name path = a1b'0] (a1)--(b'0);
        \draw[-, opacity=1] (a1)--(b'1);
        
        \draw[-, opacity=0, name path = a'0b1] (a'0)--(b1);
        \draw[-, opacity=1] (a'1)--(b0);
        \draw[-, opacity=0, name path = a'1b1] (a'1)--(b1);
        
        \path [name intersections={of=a'1b1 and a0b0,by=inter1}];
        \path [name intersections={of=a'1b1 and a1b0,by=inter2}];
        \path [name intersections={of=a'0b1 and a0b0,by=inter3}];
        \path [name intersections={of=a'0b1 and a1b0,by=inter4}];
        \filldraw [white] (inter1) circle (\inter pt);
        \filldraw [white] (inter2) circle (\inter pt);
        \filldraw [white] (inter3) circle (\inter pt);
        \filldraw [white] (inter4) circle (\inter pt);
        \draw[-, opacity=1] (a'1)--(b1);
        \draw[thick] (a'0)--(b1);
        
        \draw[thick] (a'0)--(b'0);
        \draw[-, opacity=0, name path = a'0b'1] (a'0)--(b'1);
        \draw[-, opacity=1] (a'1)--(b'0);
        
        \path [name intersections={of=a1b'0 and a'0b'1,by=inter5}];
        \filldraw [white] (inter5) circle (\inter pt);
        \draw[-, opacity=1] (a'0)--(b'1);
        \end{tikzpicture}
      \caption{Hardy model}
      \label{fig:Hardy}
    \end{subfigure}%
    ~
    \begin{subfigure}{.3\textwidth}
      \centering
        \tdplotsetmaincoords{65}{110}
        \begin{tikzpicture}[tdplot_main_coords,scale=0.7]
        \pgfmathsetmacro{\factor}{2.5};
        \pgfmathsetmacro{\inter}{1.5};
        \coordinate [label=left:$a$] (a) at (0*\factor,0*\factor,0);
        \coordinate [label=right:$b$] (b) at (0*\factor,1*\factor,0);
        \coordinate [label=left:$b'$] (b') at (1*\factor,0*\factor,0);
        \coordinate [label=right:$a'$] (a') at (1*\factor,1*\factor,0);
        
        \coordinate [label=left:$1$] (a1) at (0*\factor,0*\factor,3);
        \coordinate [label=right:$1$] (b1) at (0*\factor,1*\factor,3);
        \coordinate [label=left:$1$] (b'1) at (1*\factor,0*\factor,3);
        \coordinate [label=right:] (a'1) at (1*\factor,1*\factor,3);
        
        \coordinate [label=left:] (a0) at (0*\factor,0*\factor,2);
        \coordinate [label=right:$ 0$] (b0) at (0*\factor,1*\factor,2);
        \coordinate [label=left:$ 0$] (b'0) at (1*\factor,0*\factor,2);
        \coordinate [label=right:$ 0$] (a'0) at (1*\factor,1*\factor,2);
        
        \foreach \Point in {(a),(b),(a'),(b'),(a0),(b0),(a'0),(b'0),(a1),(b1),(a'1),(b'1)}{
            \node at \Point {\textbullet};
        }
        
        \draw[-, opacity=1] (a)--(b)--(a')--(b')--cycle;
        
        \draw[dotted, opacity=1] (a)--(a0)--(a1);
        \draw[dotted, opacity=1] (b)--(b0)--(b1);
        \draw[dotted, opacity=1] (a')--(a'0)--(a'1);
        \draw[dotted, opacity=1] (b')--(b'0)--(b'1);
        
        \draw[-, opacity=1, name path = a0b0] (a0)--(b0);
        \draw[-, opacity=1] (a1)--(b1);
        
        \draw[-, opacity=1, name path = a0b'0] (a0)--(b'0);
        \draw[-, opacity=1] (a1)--(b'1);
        
        \draw[-, opacity=1] (a'0)--(b0);
        \draw[-, opacity=0, name path = a'1b1] (a'1)--(b1);
        
        \path [name intersections={of=a0b0 and a'1b1,by=inter1}];
        \filldraw [white] (inter1) circle (\inter pt);
        \draw[-, opacity=1] (a'1)--(b1);
        
        \draw[-, opacity=1, name path = a'0b'1] (a'0)--(b'1);
        \draw[-, opacity=1] (a'1)--(b'0);
        
        \path [name intersections={of=a0b'0 and a'0b'1,by=inter2}];
        \filldraw [white] (inter2) circle (\inter pt);
        \draw[-, opacity=1] (a'0)--(b'1);
        \end{tikzpicture}
      \caption{PR box}
      \label{fig:PR}
    \end{subfigure}      
    \caption{Bundle diagrams for the support of $e$.}
    \label{fig:bundles}
\end{figure}

Given an empirical scenario, we can define (measurement) events by means of the \textit{event sheaf} $\mathcal{E}$, which maps each $U \subseteq X$ to the set of \textit{events} $ \mathcal{E}(U):=O^U$ (i.e. an event is a function from $U$ to $O$), and which maps inclusions $U \subseteq U'$ to the obvious restriction $s \mapsto s|_U$, where $s \in \mathcal{E}(U')$. 
When we want to emphasize the geometric character of events in the AB framework, we will refer to events as \textit{sections} over $U$; if $U=X$, we say the sections are \emph{global}, and if $U\neq X$, we say the sections are \emph{local}. 
This geometric interpretation is illustrated in Fig. \ref{fig:bundles}, where sections are represented as line segments over the base. For instance, in Fig. \ref{fig:Bell}, there are 4 local sections over the context $\{b', a' \}$. 

We now highlight an obvious but important feature of events in the AB framework. Let $U\subseteq X$ be covered by $\{ U_i \}$. We say that a family of sections $\{ s_i \in \E(U_i) \}$ is \textit{compatible} just in case $s_i|_{U_i \cap U_j} = s_j|_{U_i \cap U_j}$ for all $i,j$.
The `sheaf property' of $\mathcal{E}$ amounts to the following unremarkable fact: for every compatible family $\{ s_i \}$, there is a unique `larger' section $s \in \E(U)$ that restricts to $s_i$ on each $U_i$, viz. the function $s:U \rightarrow O$ that is constructed piecewise from $\{ s_i \}$. In particular, when $U=X$ then the family $\{s_i \}$ gives rise to a unique global section (for instance, in Fig. 3.1b, the family of thick edges over the base glues together to form a unique global section).

Having defined events, we now proceed to define the probabilities of these events.
Let a probability distribution on a finite set $A$ be a function $d: A \rightarrow [0,1]$ whose sum over elements of $A$ is normalized to $1$; we will use $\D(A)$ to denote the set of probability distributions on $A$.
AB describe probabilities on the events of a measurement scenario by means of a `pre-sheaf' $\D\E$. 
We will only need two features of this object. First, $\D\E$ maps each context $U \in \M'$ to the set of distributions $\D\E(U)$ on the events (sections) over $U$; we will use $e_U \in \D\E(U)$ to denote a distribution on this set. 
Second, let $U, U' \in \M'$ such that $U' \supset U$ and let $A_s$ be the set of all $r\in \E(U')$ such that $r|_U =s \in \E(U)$. $\D\E$ defines a notion of `distribution marginalization', viz. $e_{U'}|_{U} (s) := \Sum_{r \in A_s} e_{U'} (r)$ for $s\in\E(U)$. We say that a distribution $e_{U'}\in \mathcal{DE}(U')$ `restricts' to a distribution $e_U\in\mathcal{DE}(U) $ if
\beq\label{eq:marg1}
e_U=e_{U'}|_{U};
\eeq
\noindent
note that there is always some $e_U$ for which this is the case.


We may now define the notion of an `empirical model' that we earlier said would serve as a theory-independent generalization of a quantum model. 
Given an empirical scenario $(X, \M, O)$ and the pre-sheaf $\D\E$, an \textit{empirical model} $e := \{ e_C \}_{C \in \M}$ is a family of distributions that satisfies the following compatibility condition: for all $C,C' \in \M$,
\beq\label{ecompatible}
e_C|_{C \cap C'} = e_{C'}|_{C\cap C'}.
\eeq
AB (\citeyear{AB}) show that, by means of a standard prescription, any quantum model induces a corresponding empirical model.
For instance, the probabilities of the Bell model define a family $\{ e_C \}$ for which the compatibility condition (\ref{ecompatible}) simply amounts to the statement that model's correlations satisfy the `no-signaling' condition (for this reason, (\ref{ecompatible}) is also referred to in the literature as a \textit{generalized no-signaling} condition).
Furthermore, AB show that the class of empirical models is much larger than just the quantum models: it includes various important \textit{non-quantum} models which have been used as `foils' in the foundations literature, such as the PR Box (\cite{Popescu2014}) and generalized Kochen-Specker scenarios like the Specker Triangle (\cite{Spekkens2011}).

The notion of an empirical model allows us to define a three-tiered \textit{hierarchy of contextuality} that corresponds to three different strengths of `gluing failure'.
The first/strongest/top tier and the second/middle tier are purely `possibilistic' in the sense that their definitions only use the coarse-graining of $e$'s probabilities into `possible' (non-zero probability) and `impossible' (zero probability).
We now define this sector of \textit{possibilistic contextuality}. 
First, we say that the support of an empirical model $e$ is the union of the supports $ \text{supp}(e_C)$ (the set of sections that $e_C$ assigns non-zero probabilities) for each distribution $e_C$. Second, we say that a global (event) section $s \in \E (X)$ is consistent with the support of $e$ just in case $s|_{C} \in \text{supp}(e_C)$ for all $C$; otherwise we say that $s$ is inconsistent with the support of $e$.
\begin{itemize}
\item Tier 1: $e$ is \textit{strongly contextual} iff all its global (event) sections are inconsistent with the support of $e$.
Well-known quantum examples of strong contextuality are provided by all (quantum) Kochen-Specker models and the GHZ model.
Well-known non-quantum examples are provided by the PR Box and the Specker Triangle. 

\item Tier 2: $e$ is \textit{logically contextual} iff at least one of its global (event) sections is inconsistent with the support of $e$. Clearly, strong contextuality implies logical contextuality, but not vice versa (and so logical contextuality is in this sense `weaker' than strong contextuality). 
A well-known quantum example is provided by Hardy's model (\cite{Hardy1993}).
\end{itemize}

The geometric/topological essence of possibilistic contextuality can easily be visualized by means of `bundle diagrams' of (event) sections in the support of $e$, as depicted in Fig. \ref{fig:bundles}.
In Fig. \ref{fig:PR} we see that the PR Box is strongly contextual because no section over a maximal context can be extended to a global section of the bundle (indeed, the topological structure of this bundle is a combinatorial version of a `Mobius strip', which has no global sections).
Similarly, in Fig. \ref{fig:Hardy}, we see that the Hardy model is logically contextual because there exists a section (the dotted edge) which cannot be extended to a global section of the bundle. However, the Hardy model is \textit{not} strongly contextual, because it also contains local sections (e.g. the thick edges) that extend to global sections of the bundle.


The third and lowest tier of the hierarchy is a failure of gluing that is defined at the level of \textit{probabilities}:
\begin{itemize}
\item Tier 3: $e:= \{ e_C \}_{C \in \M}$ is \textit{probabilistically contextual} iff there does not exist a global (or joint) distribution $e_X$ that restricts to $e_C$ on each maximal context $C$.
\end{itemize}
Clearly, any empirical model that is possibilistically contextual is also probabilistically contextual---but the converse does not hold.

A well-known example of probabilistic contextuality is the Bell model, whose bundle diagram is depicted in Fig. \ref{fig:Bell}. 
Notice that all of the bundle's sections extend to global sections, and thus it is not possibilistically contextual.
But as Bell famously proved, the model is nonetheless probabilistically contextual, because the family of probability distributions (determined by Alice and Bob's different measurement settings) does not give rise to the joint probability distribution required for the existence of a local (NC) HVT. 
More generally, AB (\citeyear{AB}) show that an empirical model $e$ has a global distribution iff there exists a (NC) `factorizable' HVT that reproduces the probabilities of $e$: this result thus constitutes a theory-independent generalization of (Standard).\footnote{AB define `factorizability' as follows: consider the HVT specified by some distribution $h_{\Lambda}\in\D_R(\Lambda)$ over ontic states $\lambda\in\Lambda$, where ontic states induce response functions $h_C^{\lambda}\in \D_R \E (C)$ such that $e_C = \sum_{\lambda\in\Lambda} h_C^{\lambda}\cdot h_{\Lambda}(\lambda) $ for all $C\in\M$.
This HVT is \emph{factorizable} (\cite[Section 8]{AB}) iff, for any $\lambda\in\Lambda$ and for any $s\in\E(C)$ given any $C\in\M$, $
h_C^{\lambda} (s) = \prod_{m\in C} h_C^{\lambda}|_{\{m\}} \left (s|_{\{m\}}\right )$.}

In sum, the hierarchy of contextuality is:
\beq\label{eq:contexthier}
\text{Strong} \implies \text{Logical} \implies \text{Probabilistic}
\eeq
In the next subsection, we explain how the structural features that allow us to define this hierarchy can be represented within the WPS framework.

\subsection{Weak Probability Spaces: Representations of empirical models}\label{sec:dutch_formal}

Feintzeig and Fletcher (\citeyear{Feintzeig2017}) assume that Generalized HVTs (GHVTs) will at least have the structure of a Weak Probability Space (WPS). Thus, in order to demonstrate that GHVTs are Dutch Bookable, they provide a method of representing a quantum Kochen-Specker model within the WPS framework.
The main task of this section will be to obtain a theory-independent generalization of this method by constructing a WPS representation of an empirical model. 

Recall that a probability space is a triple $(Y, \Sigma, \mu)$ where $\Sigma$ is an algebra of events (viz. a $\sigma$-field on $Y$), and $\mu: \Sigma \rightarrow [0,1]$ is a probability measure.\footnote{Recall that a (probability) measure is an additive function $\mu:\Sigma\to \mathbb{R}_{\geq 0}$ such that $\mu (\emptyset)=0$ and $\mu (Y)=1$. A $\sigma$-field $\Sigma$ on $Y$ is a collection of subsets of $Y$ containing $\emptyset$ that is closed under complementation and countable unions and intersections.}
A WPS is also a triple $(Y, \Sigma, \mu)$, but it generalizes a measure space by replacing the algebra of events with a mere set of events $\Sigma \subseteq P(Y)$ and by replacing the measure with a mere set function $\mu: \Sigma \rightarrow R \supseteq  [0,1]$; we will nonetheless find it convenient to refer to this set function as a `measure' when there is no danger of confusion. 
While this definition of $\Sigma$ is incredibly weak, remember that it only functions as a template: when a WPS is used to represent an empirical model, this template will need to be filled in with the relevant co-measurability relations.

We now proceed to explain how events in AB's framework (`AB-events') can be \textit{represented} in a WPS, and how the WPS events (`$\Sigma$-events') can then be constructed from a proper subset of these representations.
For brevity, we will use `$s_U$' to denote an AB-event section in $\E (U)$.

Let $e$ be an empirical model.
An \textit{event representation} of $e$, denoted $\E^\dag$, consists of a sample space $Y$ and an injective `event transfer map' 
\beq
   \Ebar :  \bigcup_{U \subseteq X}  \E (U) \rightarrow P(Y).
\eeq
that satisfies the following analog of the `sheaf' property for AB-events: for any $U \subseteq U' \subseteq X$,
    \beq\label{eq:setglue}
    \Ebar(s_{U'}) = \bigcap_{\left \{ s_U ~:~  s_{U'}|_{U} = s_U \right \} } \Ebar (s_U) \neq \emptyset.
    \eeq
Recall that $s_{U'}$ is uniquely determined by the AB-events $\{ s_U \}$ that it restricts to; (\ref{eq:setglue}) simply encodes this set-theoretically by representing $s_{U'}$ as the conjunction of $\{ \Ebar (s_U) \}$ (the intersections are required to be non-trivial because any of these events may be possible).

At this juncture, one could impose two further conditions to ensure that an event representation $\E^\dag$ fully encodes the combinatorial structure of the event sheaf $\E$ of $e$, viz. 
\begin{quote}
    \textbf{Strong Mutual Exclusivity} For any distinct $s,s' \in\E (U)$ and any $U\subseteq X$, $\Ebar(s) \cap  \Ebar(s') = \emptyset$.
\end{quote}
\begin{quote}
    \textbf{Exhaustiveness} For any $U\subseteq X$, $\bigcup_{s\in \mathcal{E}(U)} \Ebar(s) = Y$.
\end{quote}
Call an event representation satisfying these additional conditions a \emph{combinatorial event representation}.\footnote{Every empirical model has a minimal combinatorial event representation; we construct one example in the Appendix.} In what follows, we will prove our main result (Theorem \ref{thm:main}) about the relationship between contextuality and Dutch Bookability without assuming that the event representation is combinatorial (this degree of generality allows us to keep our treatment as close to that of Feintzeig and Fletcher as possible).
However, we also show (Props. 1 and 2) that if one assumes the combinatorial structure, one can give a simpler argument for a stronger result, which has converses.

In order to avoid the clutter of repeatedly writing $\Ebar$, we will introduce the notation $S_U := \Ebar (s_U)$ for event representations, as well as $\E^{\dag} (U) := \{ \Ebar (s_U ) : s_U \in \E (U) \}$ for the set of event representations over $U \subseteq X$. 

We note that there is a `duality' between AB-events and their WPS representations: if $U \subset U'$ and $s_U = s_{U'}|_{U}$, then by (\ref{eq:setglue}), $S_{U} \supset S_{U'}$.
Furthermore, just as the event sheaf $\E$ comes with a notion of restriction, $\E^{\dag}$ possesses a dual notion of extension, i.e. $S_{U'}|^{U}=S_U$, where $s_U = s_{U'}|_{U}$. And just as every compatible family of AB-event sections $\{s_i \in \E(U_i) \}$ (where $\{ U_i \}$ covers $X$) gives rise to a unique global section $s_X$, any compatible family $\{ S_i \in \E^\dag (U_i) \}$ (i.e. a family that satisfies $S_{i}|^{U_i \cap U_j} = S_{j}|^{U_i \cap U_j}$ for all $i,j$) has a unique intersection $S_X$ such that $S_{X}|^{U_i} = S_i$ for all $i$.

To prepare ourselves for the definition of a WPS representation of $e$, let us consider how co-measurability relations can be faithfully transferred from the AB-events to the set of $\Sigma$-events that we now wish to construct.
First, recall that in the AB framework, `co-measurable events' are sections over sets of measurements that are jointly contained in a context, and `non-co-measurable events' are sections over sets of measurements that are not. 
Thus, in order to transfer the structure of co-measurable events from the AB framework to the WPS framework, we will restrict the domain of $\Ebar$ to AB-events over contexts, thereby defining the map $E: {\bigcup_{U \in \M'}  \E (U)} \rightarrow P(Y)$. 
It is this map that we will now use to construct the event set $\Sigma$ of a WPS representation.

For any context $U \in \M'$, let $\Sigma_U$ denote the algebra generated by the collection of sets $\bigcup_{x\in U} \E^{\dag} (x)$.\footnote{Here and henceforth, we will use $x$ to denote the singleton $\{x\}$ when this meaning is clear from context.} The set of WPS events $\Sigma$ is then defined as the union $\bigcup_{U \in \M'} \Sigma_U$, where $\Sigma_U$ is treated as a collection of sets. Non-co-measurable $\Sigma$-events are thus elements of $\Sigma$ which are not jointly contained in some $\Sigma_U$.

A \textit{WPS representation} of $e$, denoted $e^\dag$, is a triple $(\E^\dag, \Sigma, \mu)$ such that $(Y, \Sigma, \mu)$ form a WPS that satisfies the following three conditions. First, we require that $\mu$ allows us to define a probability space for each algebra $\Sigma_U$:
\begin{quote}
\textbf{Weak Classicality (WC)}
Let $e^{\dagger}_U:=\mu|_{\Sigma_U}$. For any context $U \in \M'$, $(Y,\Sigma_U, e^{\dagger}_U )$ is a probability space.
\end{quote}
\noindent
Second, in order to ensure that $e^\dag$ accurately encodes the empirical probabilities of $e$, we require:
\begin{quote}
\textbf{Empirical Consistency (EC)}
For any context $U \subset C  \in \M$ (where $C$ is a maximal context) and any AB-event $s\in \E(U)$,
\beq
e^{\dagger}_U( S ) = e_C |_{U} ( s ).
\eeq
\end{quote}
\noindent 
Third, since $e$ treats events over contexts as mutually exclusive, we require:
\begin{quote}
    \textbf{Mutual Exclusivity (ME)}
    For any distinct AB-events $s, s' \in \E(U)$, $e^{\dagger}_U( S \cap S' ) = 0$.
\end{quote}
\noindent
This concludes our construction of a WPS representation $e^{\dagger}$ of an empirical model $e$.\footnote{For the precise relationship between Feintzeig and Fletcher's \emph{weak hidden variable representations} and our \emph{WPS representations}, see the Appendix.}

As one might expect, any WPS representation $e^\dag$ will obey duals of the marginalization condition (\ref{eq:marg1}) and the compatibility condition (\ref{ecompatible}) for $e$. 
To describe the dual of (\ref{eq:marg1}), let $U \subseteq U' \subseteq C$ and $s\in\E(U)$, let $A_S$ be the set of $R\in\E^\dag(U')$ such that $R|^{U} = S$, and define `marginalization for extension' as $e^{\dagger}_{U'} |^{U} ( S) := \Sum_{ R \in A_S } e^{\dagger}_{U'} ( R)$. By (EC), it immediately follows that
\beq\label{eq:marg2}
e^{\dagger}_{U'} |^{U}=e^{\dagger}_{U},
\eeq
which is the dual of (\ref{eq:marg1}).
The dual of (\ref{ecompatible}) is simply the statement that WPS representations satisfy the following compatibility condition: for all maximal contexts $C, C' \in \M$,
\beq\label{eq:mucompatible}
e^{\dagger}_C|^{C \cap C'} = e^{\dagger}_{C'}|^{C \cap C'}.
\eeq

We now have two ways of formalizing the empirical probabilities of a wide class of physical models: AB's sheaf-theoretic approach, on the one hand, and the above `dual' WPS approach, on the other.  
In the next section, we will argue that AB's hierarchy of contextuality (or Gluing Inconsistency) corresponds to a hierarchy of Dutch Bookability in the WPS approach.

\section{The hierarchies} 

We begin this section by situating our result with respect to previous work on hierarchies of constraint-violations leading to Dutch Books, albeit work that has only been done for \textit{abstract} WPSs, and not WPS representations of \textit{physical} models.
Recall that Feintzeig and Fletcher (\citeyear{Feintzeig2017}) show that any WPS representation of a Kochen-Specker (i.e. a quantum) model will violate a constraint (in their terminology: the `no finite null cover' (NFNC) condition), which amounts to a maximal violation of `subadditivity' in a sense that we will make precise below. 
This of course raises the question of why this violation is maximal, and whether there might be a hierarchy of different violations enjoyed by quantum and other physical models.

Feintzeig (\citeyear{Feintzeig2014}) and Wro\'{n}ski and Godziszewski (\citeyear{Wronski2017}) have attempted to place some constraints on this question by reasoning abstractly about WPSs.
Feintzeig (\citeyear{Feintzeig2014}) notes that even if an abstract WPS does not maximally violate subadditivity, it could still exhibit a weaker violation of subadditivity that gives rise to a formal Dutch Book. 
This leads him to conjecture that an abstract WPS satisfies subadditivity iff it is not Dutch Bookable---if this were true, a WPS representation of a physical model might both offer a non-classical alternative to HVTs and avoid Dutch Books.
However, Wro\'{n}ski and Godziszewski (\citeyear{Wronski2017})  show that this conjecture is false by constructing an abstract WPS that is both Dutch Bookable and subadditive.
More importantly, they show that an abstract WPS is Dutch Bookable just in case it can be modeled by classical probability (cf. Section \ref{sec:additivity} below). 

We concur with Wro\'{n}ski and Godziszewski (\citeyear{Wronski2017}) that this exchange demonstrates the futility of simultaneously using WPSs as a framework for GHVTs and stipulating that probabilities in these theories must avoid formal Dutch Books. 
However, for all this talk of an abstract hierarchy, it could still be the case that no interesting class of physical models gives rise to Dutch Books except by maximally violating subadditivity!
In other words, it might be the case that as regards WPS representations of \textit{physical models}, no discussion of a hierarchy is necessary beyond the original Dutch Book violation demonstrated by Feintzeig and Fletcher (\citeyear{Feintzeig2017}).

In what follows, we will show that this is not the case by arguing for a correspondence between AB's hierarchy of contextuality and a hierarchy of `formal constraint'-violations that lead to Dutch Bookability.
Since Feintzeig (\citeyear{Feintzeig2014}) emphasizes formal constraints with a measure-theoretic flavor, viz. additivity conditions, and the approach of Wro\'{n}ski and Godziszewski (\citeyear{Wronski2017}) suggests convexity constraints, we will discuss both aspects in Sections \ref{sec:additivity} and \ref{sec:convexity} respectively.

\subsection{Contextuality and additivity-violation}\label{sec:additivity}

We now introduce the relevant additivity conditions that will be violated in our hierarchy. These violations are always witnessed by the `defect of subadditivity' for some (finite) set $V$ of $\Sigma$-events,
\beq
\mathfrak{a}(V):= \mu \left ( \bigcup V \right ) - \Sum_{a \in V} \mu (a ).
\eeq
A WPS violates subadditivity just in case there is a $V$ such that $\mathfrak{a} (V) > 0$, and it maximally violates subadditivity just in case there is a $V$ such that $\mathfrak{a} (V)=1$. Note that a `finite null cover' of a WPS representation---a finite collection of measure-zero sets covering the sample space $Y$---provides one example of a maximal violation of subadditivity. A WPS violates additivity just in case there is some (finite) set $V$ of \emph{disjoint} $\Sigma$-events such that $\mathfrak{a}(V)\neq 0$.

In order to construct a hierarchy of additivity-violation, we will also need the notion of an `extension' of a WPS $W \equiv (Y, \Sigma, \mu)$, as well as the more refined notion of a monotonic extension. An \textit{extension} of $W$ is a WPS $(Y,\Sigma',\mu')$, where $\Sigma'$ contains the algebra generated by $\Sigma$, and $\mu'|_{\Sigma}=\mu$.\footnote{A classical extension is just an extension where $\mu'$ is a probability measure.} A \textit{monotonic extension} of $W$ is an extension $(Y, \Sigma' , \mu')$ of $W$ for which $\mu': \Sigma' \rightarrow [0,1]$ is a monotonic function, i.e. if $ A \subseteq B$, then $\mu'(A) \leq \mu' (B)$.\footnote{Note that $\mu'$ is really a `fuzzy measure' in the sense of (\cite{Sugeno}).
}

In the case of \textit{abstract} WPSs, it is easy to construct a hierarchy of additivity-violation that is tantalizingly similar to AB's hierarchy of contextuality.
It is clear that a maximal violation of subadditivity implies a violation of subadditivity, which in turn implies that any of the monotonic extensions of the WPS in question must violate additivity.
Given this observation, one may naturally expect the following hierarchy for WPS representations of empirical models: strong contextuality will correspond to a maximal violation of subadditivity; logical contextuality will correspond to a violation of subadditivity; and probabilistic contextuality will correspond to a violation of additivity for monotonic extensions of $e^\dag$.
The following theorem shows that this expectation is realized.

\begin{theorem}\label{thm:main}
 Let $e$ be an empirical model and let $e^\dag$ be any of its WPS representations.
	\begin{enumerate}
	    \item If $e$ is strongly contextual, then $e^{\dagger}$ maximally violates subadditivity.
	    \item If $e$ is logically contextual, then $e^{\dagger}$ violates subadditivity.
	    \item If $e$ is probabilistically contextual, then any monotonic extension of $e^{\dagger}$ violates additivity.
	\end{enumerate}
	
\end{theorem}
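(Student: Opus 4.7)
The plan is to derive all three parts from a single combinatorial picture of the sample space $Y$. For each maximal context $C \in \M$ set
\[
N_C := \left(Y \setminus \bigcup_{s \in \E(C)} S\right) \cup \bigcup_{s \in \E(C),\ e_C(s) = 0} S \in \Sigma_C,
\]
which has $\mu$-measure $0$ by (WC), (EC), and (ME); and for each $U \in \M'$ and distinct $s, s' \in \E(U)$ let $Z_{s,s'} := S \cap S' \in \Sigma_U$, with $\mu(Z_{s,s'}) = 0$ by (ME). Both collections are finite since $X$ and $O$ are. The central lemma I would prove says: for every $y \in Y$ lying outside every $N_C$ and every $Z_{s,s'}$ there is, for each $C \in \M$, a unique $s_C^y \in \mathrm{supp}(e_C)$ with $y \in S_{s_C^y}$; the family $\{s_C^y\}_{C \in \M}$ is compatible (i.e., $s_C^y|_{C \cap C'} = s_{C'}^y|_{C \cap C'}$ for all $C, C'$); and hence glues via the sheaf property of $\E$ to a unique global section $s_X^y \in \E(X)$ consistent with the support of $e$. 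The nontrivial step is compatibility: for any $C, C' \in \M$, the sheaf-dual (\ref{eq:setglue}) forces $y \in S_{s_C^y|_{C \cap C'}} \cap S_{s_{C'}^y|_{C \cap C'}}$, and since $y$ avoids every $Z_{s,s'}$ the two restrictions must agree.

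Parts (1) and (2) are then direct corollaries. For (1), strong contextuality forbids any such $s_X^y$, so every $y \in Y$ lies in the finite cover $V := \{N_C\}_{C \in \M} \cup \{Z_{s,s'} : U \in \M',\ s \neq s' \in \E(U)\}$. Thus $\bigcup V = Y \in \Sigma$ with $\mu(Y) = 1$ and $\sum_{v \in V} \mu(v) = 0$, giving $\mathfrak{a}(V) = 1$. For (2), logical contextuality supplies some $s^* \in \mathrm{supp}(e_{C^*})$ lacking a support-consistent extension; enlarging $V$ by $\{S_s : s \in \mathrm{supp}(e_{C^*}) \setminus \{s^*\}\}$, any $y$ avoiding the null sets generates a support-consistent $s_X^y$ whose $C^*$-restriction cannot equal $s^*$ (else $s^*$ would extend to the support-consistent $s_X^y$). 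So $\bigcup V = Y$ while $\sum_{v \in V} \mu(v) = 1 - e_{C^*}(s^*)$, yielding $\mathfrak{a}(V) = e_{C^*}(s^*) > 0$.

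For Part (3) I would argue the contrapositive. Suppose $(Y, \Sigma', \mu')$ is a monotonic extension of $e^\dagger$ with $\mu'$ additive; then $\mu'$ is a finitely additive (hence subadditive) probability measure on the algebra $\Sigma' \supseteq \sigma(\Sigma)$. For each $s_X \in \E(X)$, define $e_X(s_X) := \mu'(S_X)$ where $S_X := \bigcap_{C \in \M} S_{s_X|_C} \in \Sigma'$ by (\ref{eq:setglue}). The central-lemma argument, now read through $\mu'$, shows that for each $C$ and each $s \in \E(C)$ the leftover $S_s \setminus \bigcup_{s_X|_C = s} S_X$ sits inside $\bigcup_{C'} N_{C'} \cup \bigcup_{U, s' \neq s''} Z_{s',s''}$, which has $\mu'$-measure $0$ by subadditivity of $\mu'$ and the vanishing of $\mu'$ on each $N_{C'}$ and $Z_{s',s''}$. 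Combined with the $\mu'$-version of (ME) (which descends to the $S_X$ via monotonicity), this forces $e_X|_C = e_C$ for every $C$, so $e_X$ is a global distribution for $e$, contradicting probabilistic contextuality.

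The main obstacle I anticipate is the compatibility step in the central lemma: passing from $y \in S_{s_C^y} \cap S_{s_{C'}^y}$ to $s_C^y|_{C \cap C'} = s_{C'}^y|_{C \cap C'}$ via (\ref{eq:setglue}) and (ME) is the one place where the sheaf-dual structure is genuinely used. Once that lemma is in hand, Parts 1 and 2 amount to choosing the right finite cover $V$, and Part 3 is the same covering argument viewed through the additive extension $\mu'$.
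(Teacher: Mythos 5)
Your proposal is correct and follows essentially the same strategy as the paper's proof: you excise a finite family of null $\Sigma$-events from $Y$ (your $N_C$ and $Z_{s,s'}$ play the role of the paper's $D_1$, $D_2$, and $D_3$), prove that every surviving point lies in the representative $S_X$ of a global section (the paper's Lemma 1, which glues over singletons and so sidesteps your compatibility step), and then read off the violating covers. The only organizational differences are that you fold support-consistency into the lemma itself and run Part 3 in the contrapositive by constructing $e_X(s_X):=\mu'(S_X)$ from an additive monotonic extension, whereas the paper argues directly via the disjoint collection $\{\tilde{S}_X\}$; both routes rest on the same underlying facts.
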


In fact, if one makes the stronger assumption that $e^\dagger$ is \textit{combinatorial}, meaning that its event representation is combinatorial (cf. Section \ref{sec:dutch_formal}), then one obtains an \textit{equivalence} between each tier of contextuality and a corresponding condition that witnesses additivity-violation. We now state these conditions and the equivalence result. 

Given that strong and logical contextuality only concern the possibilistic data of an empirical model $e$, we expect the corresponding (sub)additivity-violations to depend only on the possibilistic data of $e^\dag$. To characterize such violations, we will introduce the notion of an \textit{additive cover}, viz. a finite set $A \subseteq \Sigma$ of mutually disjoint $\Sigma$-events that is additive (so $\mathfrak{a}(A)=0$) and that covers the sample space $Y$.
Now say $e^\dag$ \textit{strongly $\V$-violates subadditivity} just in case a finite set $V$ of measure-zero elements in $\V\subseteq \Sigma$ covers \emph{all} the non-measure-zero elements of an additive cover $A\subseteq \V$ (and so $\mathfrak{a}(V) = 1$). Additionally, say $e^\dag$ \textit{logically $\V$-violates subadditivity} just in case a finite set $V$ of measure-zero elements in $\V\subseteq \Sigma$ covers \emph{some} non-measure-zero element $a$ of an additive cover $A\subseteq \V$ (and so $\mathfrak{a}(V\cup A\setminus \{a\}) >0$).

Finally, in order to characterize additivity-violations stemming from the probabilistic data of $e^\dag$, we turn to monotonic extensions. Say \emph{$e^\dag$ $\V$-violates additivity} just in case the algebra generated by $\V$ contains a collection $V$ of disjoint sets that violates additivity in any monotonic extension of $e^\dag$ (and so $\mathfrak{a}({V})\neq 0$ in these extensions).

In the following proposition, the above conditions are witnessed by the set $\V_{\M}$ of $\Sigma$-events corresponding to AB-events over maximal contexts (i.e. $\V_{\M} := \{S : S\in \E^\dag (C), C\in\M \}$).
\begin{prop}\label{prop:1}
Let $e$ be an empirical model and let $e^\dagger$ be any of its combinatorial WPS representations.
\begin{enumerate}
	    \item $e$ is strongly contextual iff $e^{\dagger}$ strongly $\V_{\M}$-violates subadditivity.
	    \item $e$ is logically contextual iff $e^{\dagger}$ logically $\V_{\M}$-violates subadditivity.
	    \item $e$ is probabilistically contextual iff $e^\dag$ $\V_{\M}$-violates additivity.
	\end{enumerate}
\end{prop}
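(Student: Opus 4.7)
\textbf{The plan} is to exploit the combinatorial structure (Strong ME and Exhaustiveness) together with the sheaf identity (\ref{eq:setglue}) to give direct set-theoretic arguments for (1) and (2), and to construct (or fail to construct) classical extensions on the algebra generated by $\V_{\M}$ for (3). The workhorse fact I will rely on throughout is that under the combinatorial assumption, the atoms of $\mathrm{algebra}(\V_{\M})$ are precisely $\{\Ebar(s) : s \in \E(X)\}$: any atom is an intersection $\bigcap_{C \in \M} \Ebar(s_C)$, which equals $\Ebar(s)$ for compatible families (by (\ref{eq:setglue})) and is empty for incompatible families (by Strong ME applied at any disagreeing singleton measurement).

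\textbf{For the forward directions of (1) and (2)}, fix any $C_0 \in \M$ and take $A := \E^\dagger(C_0)$; this is an additive cover of $Y$ by Strong ME, Exhaustiveness, (WC), and (EC). For each non-measure-zero $\Ebar(t) \in A$ (case (1)), or the specific non-extendable $t \in \mathrm{supp}(e_{C_0})$ (case (2)), decompose $\Ebar(t) = \bigsqcup_{s : s|_{C_0} = t} \Ebar(s)$. Strong (resp.\ logical) contextuality gives, for each extending global $s$, some $C_s \in \M$ with $s|_{C_s} \notin \mathrm{supp}(e_{C_s})$, so that $\Ebar(s|_{C_s}) \in \V_{\M}$ has measure zero by (EC) and contains $\Ebar(s)$ by (\ref{eq:setglue}); a finite union yields the required $V \subseteq \V_{\M}$. \textbf{For the converses}, assume such a cover $V$ and suppose for contradiction that a globally consistent $s_X \in \E(X)$ exists; pick $y \in \Ebar(s_X) \neq \emptyset$, identify the unique $a = \Ebar(r) \in A$ containing $y$, and force $r = s_X|_C$ by Strong ME, whence $\mu(a) > 0$ by consistency. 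Since $V$ covers $a$, some measure-zero $\Ebar(q) \in V$ contains $y$, and Strong ME again forces $q = s_X|_{C'}$; but $\Ebar(q)$ being measure-zero contradicts $s_X|_{C'} \in \mathrm{supp}(e_{C'})$.

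\textbf{Part (3)'s converse direction} is the easiest. From a global distribution $p^*$ with $p^*|_C = e_C$, define $\mu^*(A) := \sum_{s : \Ebar(s) \subseteq A} p^*(s)$ on $\mathrm{algebra}(\V_{\M})$. By the atomic description this is well-defined and classical; it extends $\mu$ on $\Sigma$ because $\mu^*(\Ebar(s_C)) = \sum_{s|_C = s_C} p^*(s) = e_C(s_C) = \mu(\Ebar(s_C))$; and as a classical (hence monotonic) extension it makes every disjoint collection additive, so no $V$ can witness a $\V_{\M}$-violation in $\mu^*$.

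\textbf{The main obstacle is (3) forward}: producing a disjoint $V \subseteq \mathrm{algebra}(\V_{\M})$ whose non-additivity persists across every monotonic extension. My planned strategy is the contrapositive: if no witnessing $V$ exists, extract a monotonic extension $\mu'$ that is additive on \emph{every} disjoint sub-collection of $\mathrm{algebra}(\V_{\M})$. Setting $p(s) := \mu'(\Ebar(s))$ yields a distribution on $\E(X)$ (additivity on $\E^\dagger(X)$ gives $\sum_s p(s) = 1$), and additivity on the sub-collection $\{\Ebar(s) : s|_C = s_C\}$ combined with (EC) gives $p|_C(s_C) = \mu(\Ebar(s_C)) = e_C(s_C)$, contradicting probabilistic contextuality. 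The delicate point is the quantifier juggling in ``no $V$ witnesses universally $\Rightarrow$ some extension is universally additive''---under a natural reading of the definition of $\V$-violation this is immediate (the witness $V$ is allowed to depend on the monotonic extension), but under a stricter reading one would need an LP-duality argument converting a Bell-type inequality separating $e$ from the classical-marginal polytope into a specific disjoint $V$ assembled from the positive and negative coefficients of the separating hyperplane.
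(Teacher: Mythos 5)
Your proposal is correct and follows essentially the same route as the paper's proof: the same additive cover $\E^\dagger(C_0)$, the same decomposition of each $\Ebar(t)$ into the disjoint global-section atoms $\Ebar(s)$ (which, as you note, generate the algebra on $\V_{\M}$), the same measure-zero covers extracted from strong/logical contextuality, and for (3) the same induced extension from a global distribution in one direction and a marginalization-failure witness in the other. On your flagged ``delicate point'' in (3) forward: the paper's own proof chooses $S^*$ (hence $V$) \emph{after} fixing the monotonic extension $\mu'$, i.e.\ it adopts your ``natural reading'' in which the witness may depend on the extension, so no LP-duality argument is needed to match the paper's result.
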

\noindent
Note that these violations of additivity are \emph{invariant}: if \emph{one} combinatorial WPS representation of $e$ (strongly or logically) $\V_{\M}$-violates (sub)additivity, then so do \emph{all} combinatorial WPS representations of $e$. There is no apparent reason that this need be the case for Theorem \ref{thm:main}, where we consider \emph{arbitrary} WPS representations (and one may well consider this to be another sort of pathology of these structures).

The proof of Theorem \ref{thm:main} involves some set-theoretic yoga. We relegate the details to the Appendix, but it is possible to briefly convey an intuitive idea of how the argument works. First, observe that $Y$ contains two measure-zero collections of $\Sigma$-events which do not represent any AB-events, viz. the collection of `contradictory events' $D_1$ and the collection of `non-existent outcomes' $D_2$ (cf. Eqns (5.1) and (5.2) of the Appendix for the full definitions).
In order to translate the definitions of strong/logical/probabilistic contextuality into a WPS setting, we thus excise $\bigcup(D_1 \cup D_2)$ from $Y$ to obtain $Z$, which represents all the properties that we care about.\footnote{This rough strategy of `excising sets' was applied by Feintzeig and Fletcher (\citeyear{Feintzeig2017}) only to Kochen-Specker models. We go much further, generalizing the strategy using sheaf theory so that it applies to \emph{all} forms of contextuality.}
The translation is then carried out by relying on a key lemma (Lemma \ref{lem:notnull}), which tells us that any $z \in Z$ is contained in some $S_X$ (representing a global section $s_X$).
For instance, this lemma lets us translate `logical contextuality' into the statement that some subset of $Z$ is contained in a collection of measure zero sets, which lets us construct the relevant sub-additivity-violating collection.

\begin{figure}[!htb]
    \centering
    \begin{subfigure}{.3\textwidth}
      \centering
        \tdplotsetmaincoords{65}{110}
        \begin{tikzpicture}[tdplot_main_coords,scale=0.7]
        \pgfmathsetmacro{\factor}{2.5};
        \pgfmathsetmacro{\inter}{1.5};
        
        \coordinate [label=left:$a_1$] (a1) at (0*\factor,0*\factor,1);
        \coordinate [label=right:$b_1$] (b1) at (0*\factor,1*\factor,1);
        \coordinate [label=left:$b'_1$] (b'1) at (1*\factor,0*\factor,1);
        \coordinate [label=right:] (a'1) at (1*\factor,1*\factor,1);
        
        \coordinate [label=left:] (a0) at (0*\factor,0*\factor,0);
        \coordinate [label=right:$b_0$] (b0) at (0*\factor,1*\factor,0);
        \coordinate [label=left:$b'_0$] (b'0) at (1*\factor,0*\factor,0);
        \coordinate [label=right:$a'_0$] (a'0) at (1*\factor,1*\factor,0);
        
        \foreach \Point in {(a0),(b0),(a'0),(b'0),(a1),(b1),(a'1),(b'1)}{
            \node at \Point {\textbullet};
        }
        
        \draw[dashed, opacity=1, name path = a0b0] (a0)--(b0);
        \draw[-, opacity=1, name path = a0b1] (a0)--(b1);
        \draw[-, opacity=1, name path = a1b0] (a1)--(b0);
        \draw[thick, opacity=1, name path = a1b1] (a1)--(b1);
        
        \draw[-, opacity=1] (a0)--(b'1);
        \draw[thick, opacity=1, name path = a1b'0] (a1)--(b'0);
        \draw[-, opacity=1] (a1)--(b'1);
        
        \draw[-, opacity=0, name path = a'0b1] (a'0)--(b1);
        \draw[-, opacity=1] (a'1)--(b0);
        \draw[-, opacity=0, name path = a'1b1] (a'1)--(b1);
        
        \path [name intersections={of=a'1b1 and a0b0,by=inter1}];
        \path [name intersections={of=a'1b1 and a1b0,by=inter2}];
        \path [name intersections={of=a'0b1 and a0b0,by=inter3}];
        \path [name intersections={of=a'0b1 and a1b0,by=inter4}];
        \filldraw [white] (inter1) circle (\inter pt);
        \filldraw [white] (inter2) circle (\inter pt);
        \filldraw [white] (inter3) circle (\inter pt);
        \filldraw [white] (inter4) circle (\inter pt);
        \draw[-, opacity=1] (a'1)--(b1);
        \draw[thick] (a'0)--(b1);
        
        \draw[thick, opacity=1] (a'0)--(b'0);
        \draw[-, opacity=0, name path = a'0b'1] (a'0)--(b'1);
        \draw[-, opacity=1] (a'1)--(b'0);
        
        \path [name intersections={of=a1b'0 and a'0b'1,by=inter5}];
        \filldraw [white] (inter5) circle (\inter pt);
        \draw[-, opacity=1] (a'0)--(b'1);
        \end{tikzpicture}
      \caption{Hardy model}
      \label{fig:nerveHardy}
    \end{subfigure}%
    ~
    \begin{subfigure}{.3\textwidth}
      \centering
        \tdplotsetmaincoords{65}{110}
        \begin{tikzpicture}[tdplot_main_coords,scale=0.7]
        \pgfmathsetmacro{\factor}{2.5};
        \pgfmathsetmacro{\inter}{1.5};
        
        \coordinate [label=left:$a_1$] (a1) at (0*\factor,0*\factor,1);
        \coordinate [label=right:$b_1$] (b1) at (0*\factor,1*\factor,1);
        \coordinate [label=left:] (b'1) at (1*\factor,0*\factor,1);
        \coordinate [label=right:] (a'1) at (1*\factor,1*\factor,1);
        
        \coordinate [label=left:] (a0) at (0*\factor,0*\factor,0);
        \coordinate [label=right:] (b0) at (0*\factor,1*\factor,0);
        \coordinate [label=left:$b'_0$] (b'0) at (1*\factor,0*\factor,0);
        \coordinate [label=right:$a'_0$] (a'0) at (1*\factor,1*\factor,0);
        
        \path[name path = b1b'0] (b1)--(b'0);
        \path[name path = a1a'0] (a1)--(a'0);
        \path [name intersections={of=a1a'0 and b1b'0,by=inter}];
        
        \fill[gray!20] (a1)--(b1)--(a'0)--(b'0)--cycle;
        \fill[gray!10] (a1)--(b'0)--(inter)--cycle;
        \foreach \Point in {(b'0),(a1),(b1),(a'0)}{
            \node at \Point {\textbullet};
        }
        \draw[dashed] (b1)--(b'0);
        \draw[thick] (a1)--(b1)--(a'0)--(b'0)--cycle;
        \filldraw [gray!10] (inter) circle (\inter pt);
        \draw[thick] (a1)--(a'0);

        \end{tikzpicture}
      \caption{3-simplex}
      \label{fig:nerve3}
    \end{subfigure}      
    \caption{}
    \label{fig:nerves}
\end{figure}

In the case where $e$ is possibilistically contextual, there is also a beautiful topological interpretation of $e^\dag$'s violation of subadditivity that mirrors the bundle diagrams of Fig. \ref{fig:bundles}.
For instance, consider $e^\dag$ for the Hardy model. Fig. \ref{fig:nerveHardy} is a `nerve diagram' of the $\Sigma$-events in the support of $\mu$, meaning that $n$-simplices represent $n$-intersections of $\Sigma$-events. 
Notice that diagram is essentially the bundle diagram \ref{fig:Hardy}, except that the AB-event $a\mapsto 0$ has been replaced by the $\Sigma$-event (or $0$-simplex) $a_0:=E(a\mapsto 0)$, the AB-event $\{a\mapsto 0,b\mapsto 0 \}$ with the $\Sigma$-event (or $1$-simplex) $a_0\cap b_0$, and so on. 
The relevant violation of subadditivity can then be understood as stemming from the fact that the dashed $1$-simplex is not the edge of a $3$-simplex (the WPS-analogue of a global section) whose edges are in the support of $\mu$.\footnote{Clearly, this suggests a cohomological formulation of additivity-violation, although we have not the room to treat this aspect here.}

Two important corollaries immediately follow from Thm. \ref{thm:main}.
First, since AB (\citeyear{AB}) have shown that the Hardy model is logically contextual but not strongly contextual, and that the Bell model is probabilistically contextual but not logically contextual, our result shows that there are quantum models which `inhabit' lower tiers of the hierarchy of additivity-violation.
It also extends the original NFNC-violation result of Feintzeig and Fletcher (\citeyear{Feintzeig2017}) by showing that \textit{non-quantum} models like the PR Box have WPS representations which maximally violate subadditivity.

To state the second corollary, we will need a final notion of `extension' discussed in (\cite{Wronski2017, Feintzeig2014}): a \textit{classical extension} of $W$ is an extension $(Y,\Sigma',\mu')$ that is a probability space, i.e. $\Sigma'$ is an algebra, and $\mu'$ is additive.
The corollary is that $e$ is contextual (either strongly, or logically, or probabilistically) only if its WPS representations do not have a classical extension. Thm. 2 of (\cite{Wronski2017})---a minor adaptation of a result of Paris (\citeyear{Paris2001})---states that an abstract WPS is Dutch Bookable iff it does not have a classical extension. Thus, if $e$ is contextual then $e^{\dag}$ is Dutch Bookable.
The complete set of relations between our two hierarchies is shown in Fig. \ref{fig:hierarchy}.

     

We note that the aforementioned result of Paris (\citeyear{Paris2001}) is essentially one about `convexity', which lies at the heart of the formal concept of a Dutch Book.
The next section reviews formal Dutch Books and describes the hierarchy of contextuality in terms of a hierarchy of convexity-violation. 




\subsection{Dutch Bookability and convexity-violation}\label{sec:convexity}

The violation of the formal `No Dutch Books' constraint and its standard normative interpretation is well-trodden territory, so we shall be brief in our summary. 
Our main aim is to stress that since formal Dutch Bookability is a statement about convexity-violation, it follows that AB's hierarchy of contextuality can be straightforwardly understood as a hierarchy of convexity-violation (cf. Prop. \ref{prop:convexity} below). 
In order to separate this point from normative issues, we will defer our review of the standard normative interpretation of formal Dutch Books to Section \ref{sec:normative}.

\begin{figure}
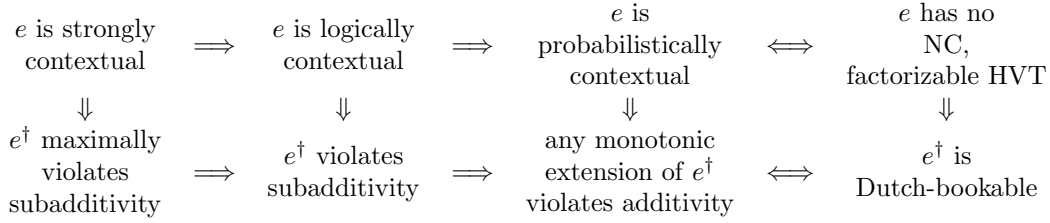

\begin{center}
\begin{tabular}{c c c c c c c}
     \Centerstack[c]{$e$ is strongly\\ contextual} & $\Longrightarrow$ & \Centerstack[c]{$e$ is logically\\ contextual} & $\Longrightarrow$ & \Centerstack[c]{$e$ is \\ probabilistically \\ contextual} & $\Longleftrightarrow$ &  \Centerstack[c]{$e$ has no \\ NC,\\ factorizable HVT}  \\
     $\Downarrow$ & & $\Downarrow$ & & $\Downarrow$ & & $\Downarrow$ \\
     \Centerstack[c]{$e^{\dagger}$ maximally\\ violates \\ subadditivity} & $\Longrightarrow$ & \Centerstack[c]{$e^{\dagger}$ violates \\ subadditivity} & $\Longrightarrow$ &  \Centerstack[c]{any monotonic \\ extension of $e^{\dagger}$ \\ violates additivity} & $\Longleftrightarrow$ & \Centerstack[c]{$e^{\dagger}$ is\\ Dutch-bookable}
\end{tabular}
\end{center}
\caption{The hierarchy of contextuality (top) and the hierarchy of Dutch Bookability from additivity-violation (bottom).}
\label{fig:hierarchy}
\end{figure}


Let $(Y, \Sigma, \mu)$ be a WPS. 
To succintly state the standard definition of a formal Dutch Book (adapted to WPSs), it will be useful to define the (finite) set of functions $\mathbb{V}:= \{ ~V_y = \chi_{(~\cdot~)} (y): \Sigma \rightarrow \{0, 1\}  ~\}$, where $y\in Y$ and $\chi_A: Y \rightarrow \{0,1\}$ is the characteristic function of $A \in \Sigma$. 
Each function $V_y \in \mathbb{V}$ can be identified with the `elementary/atomic event' specified by an element $y$ of the sample space $Y$. 
We say that a WPS violates a `No Dutch Books' constraint (or: is Dutch Bookable) iff there exists a function $s: \Sigma \rightarrow \mathbb{R}$ and a finite collection of sets $\V \subset \Sigma$ such that
\beq\label{eq:DB}
\text{for any $V_y \in \mathbb{V}$, } \sum_{A \in \V} s (A) \cdot (V_y (A)-\mu(A)) <0.
\eeq
\noindent
It is also common to leave $(Y, \Sigma)$ implicit and simply say that the set-function $\mu$ is Dutch Bookable. 

It is a well-known fact that condition (\ref{eq:DB}) is a statement about convexity-violation.
More precisely: let $\text{Conv}(\mathbb{V})$ be the convex hull of $\mathbb{V}$, i.e. the set of all convex combinations of functions in $\mathbb{V}$. We will say that $\mu$ is $\mathbb{V}$-convex iff $\mu\in\text{Conv}(\mathbb{V})$. Then, as originally noted by de Finetti (\citeyear{deFinetti-ToP:1974}), (for finite $\mathbb{V}$) $\mu$ is not Dutch Bookable iff $\mu$ is $\mathbb{V}$-convex.\footnote{See also Theorem 2 of Paris (\citeyear{Paris2001}).}

With this background in place, it is easy to see that AB's hierarchy of contextuality can be interpreted as a hierarchy of obstructions to $\mathbb{V}$-convexity. For $\V \subseteq \Sigma$, we say that a WPS representation $e^{\dagger}$ $\V$-violates $\mathbb{V}$-convexity so long as $\mu|_{\V} $ cannot be written as a convex sum of $V_y|_{\V}$ for $V_y\in \mathbb{V}$; this will constitute the lowest tier of the hierarchy. 
In order to describe the higher tiers, let us introduce the Boolean addition operator $\oplus$, i.e. $1\oplus 1 =1$, $0\oplus 1=1\oplus 0 = 1$, and $0\oplus 0 = 0$ (logical sums of functions are conducted pointwise). 
Let $\chi_\mu$ be the characteristic function of the support of $\mu$ (i.e. for $A\in\Sigma$, $\chi_\mu(A)=0$ if $\mu(A)=0$ and $\chi_\mu(A)=1$ otherwise). 
We will say that $e^{\dagger}$ logically $\V$-violates $\mathbb{V}$-convexity iff $\chi_\mu |_\V$ is not a logical sum of $V_y|_{\V}$ for $V_y\in \mathbb{V}$; clearly, in such a case, no convex combination of $V_y$ will yield $\mu$. In this case, however, it may still be that
\begin{equation}\label{eq:detfrag}
 \chi_\mu|_\V = f+ \lsum_{V_y\in \mathbb{V}'} V_y|_{\V}
\end{equation}
for some non-empty $\mathbb{V}'\subseteq \mathbb{V}$ and some function $f:\V\to\{0,1\}$. 
If (\ref{eq:detfrag}) cannot be satisfied for $\V$, then we will say that $e^{\dagger}$ strongly $\V$-violates $\mathbb{V}$-convexity. Given these definitions, one immediately obtains the following correspondence between contextuality and convexity-violation (we sketch a proof in the Appendix):
\begin{prop}\label{prop:convexity}For any empirical model $e$ and any of its combinatorial WPS representations $e^\dag$,

\begin{enumerate}
    \item $e$ is strongly contextual iff $e^{\dagger}$ strongly $\V_{\M}$-violates $\mathbb{V}$-convexity.
    \item $e$ is logically contextual iff $e^{\dagger}$ logically $\V_{\M}$-violates $\mathbb{V}$-convexity.
    \item $e$ is probabilistically contextual iff $e^{\dagger}$ $\V_{\M}$-violates $\mathbb{V}$-convexity.
\end{enumerate}
\end{prop}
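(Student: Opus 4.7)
The plan is to leverage the combinatorial structure of $e^\dag$ to set up a canonical correspondence between the atoms $V_y \in \mathbb{V}$ and the global sections $\E(X)$, and then translate each of the three $\V_\M$-convexity-violation conditions into its matching tier of contextuality. Combined with the identification $\mu|_{\V_\M}(\Ebar(s)) = e_C(s)$ supplied by (EC), every item will reduce to a statement directly about $e$.

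First I would establish the correspondence. Given $y \in Y$, Exhaustiveness and Strong Mutual Exclusivity applied at every $U \subseteq X$ associate to $y$ a unique local section $s^y_U$ with $y \in \Ebar(s^y_U)$; compatibility under restriction and the sheaf property of $\E$ glue these into a unique global section $s^y \in \E(X)$. The resulting map $\Phi : y \mapsto s^y$ is surjective, since $\Ebar(s_X) \neq \emptyset$ by (\ref{eq:setglue}) for every $s_X$, and mutual exclusivity forces $s^y = s_X$ for any $y$ in $\Ebar(s_X)$. Under $\Phi$, for $S = \Ebar(s)$ with $s \in \E(C)$ one has $V_y|_{\V_\M}(S) = 1$ iff $s^y|_C = s$, so $V_y|_{\V_\M}$ depends only on $s^y$, and distinct global sections give distinct restrictions on $\V_\M$ by the sheaf property.

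For part 3, a convex decomposition $\mu|_{\V_\M} = \sum_i \lambda_i V_{y_i}|_{\V_\M}$ yields, after grouping by $\Phi$-fibers, a probability distribution $p(s_X) := \sum_{i : s^{y_i} = s_X} \lambda_i$ on $\E(X)$; evaluated at $S = \Ebar(s)$, the decomposition rewrites as the marginalization identity $e_C(s) = \sum_{s_X|_C = s} p(s_X)$, and surjectivity of $\Phi$ reverses the construction. Thus $\mathbb{V}$-convexity holds iff there is a global distribution on $\E(X)$ marginalizing to each $e_C$, i.e., iff $e$ is not probabilistically contextual. For part 2, setting $\mathcal{S}' := \{s^y : y \in Y'\}$ converts $\chi_\mu|_{\V_\M} = \lsum_{y \in Y'} V_y|_{\V_\M}$ into the requirement $\text{supp}(e_C) = \{s_X|_C : s_X \in \mathcal{S}'\}$ for every maximal $C$; the $\supseteq$ half forces $\mathcal{S}'$ into the set $\Gamma$ of global sections consistent with the support, while the $\subseteq$ half asks every local support section to extend to some element of $\Gamma$. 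Taking $\mathcal{S}' = \Gamma$ realizes such a decomposition precisely when $e$ is not logically contextual.

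For part 1, I would argue that a decomposition $\chi_\mu|_{\V_\M} = f + \lsum_{y \in \mathbb{V}'} V_y|_{\V_\M}$ with $\mathbb{V}' \neq \emptyset$ and $f : \V_\M \to \{0,1\}$ exists iff some single $y$ satisfies $V_y|_{\V_\M} \leq \chi_\mu|_{\V_\M}$ pointwise: any such $y$ yields the decomposition with $\mathbb{V}' = \{y\}$ and $f := \chi_\mu|_{\V_\M} - V_y|_{\V_\M}$, and conversely any $y \in \mathbb{V}'$ in a valid decomposition inherits the bound via $V_y \leq \lsum V_y \leq \chi_\mu$. The pointwise bound translates directly into $s^y$ being consistent with the support, i.e., $s^y \in \Gamma$, so strong $\V_\M$-violation is exactly $\Gamma = \emptyset$, which is strong contextuality. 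The main technical hurdle I anticipate lies in part 2, where one must carefully manage the double constraint imposed by the two inclusions of the support-matching condition and verify that $Y'$ can always be chosen as a $\Phi$-preimage of some admissible $\mathcal{S}' \subseteq \Gamma$; once the $\Phi$-correspondence is firmly in hand, however, all three biconditionals follow from routine rewritings of pointwise identities on $\V_\M$ in terms of restrictions of global sections.
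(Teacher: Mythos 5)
Your proposal is correct and takes essentially the same route as the paper: your fiber map $\Phi$ from sample points to global sections is just the inverse presentation of the paper's bijection $v:\E(X)\to\mathbb{V}|_{\V_{\M}}$ (and your fiber-grouping in part 3 is the paper's induced bijection $d$ on distributions), after which each tier reduces to the same translation between restricted atoms and global sections consistent with the support. Your reduction of condition (\ref{eq:detfrag}) to the existence of a single atom with $V_y|_{\V_{\M}}\leq\chi_\mu|_{\V_{\M}}$ is a slightly more explicit rendering of the paper's "pick a $V\in\mathbb{V}'$" step, but the argument is the same.
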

\noindent
By de Finetti's result, the probability measure $\mu$ of $e^{\dagger}$ violates $\mathbb{V}$-convexity iff it is Dutch Bookable; thus, this hierarchy of violations of $\mathbb{V}$-convexity is a hierarchy of conditions witnessing Dutch Bookability.

\section{Is Dutch Bookability normative?}\label{sec:normative}

In this paper, we have demonstrated a correspondence between AB's hierarchy of contextuality and a hierarchy of Dutch Bookability for WPS representations. This shows that (Standard) arguments against CHVTs can be mapped to corresponding (Dutch) arguments for WPS representations. In this final section, we turn to a partial evaluation of whether formal Dutch Bookability can be used to provide a normative argument against GHVTs. 

Feintzeig and Fletcher (\citeyear{Feintzeig2017}) seek to use (Dutch) arguments to demonstrate that the probabilities of hidden variable theories based on WPSs---which we have labeled GHVTs---cannot be given a `rational' subjective intepretation. To unpack this claim, first recall that the existence of a `finite null cover' constitutes a maximal violation of subadditivity, the strongest rung of the Dutch Bookability hierarchy (pictured in Figure \ref{fig:hierarchy}). Now, note that Feintzeig and Fletcher's specific claim is that `whenever a weak probability space [has a finite null cover], anyone who sets their beliefs according to that probability space will be irrational in the sense that she will accept as fair a series of bets on which she is guaranteed to lose money' (\citeyear[p. 308]{Feintzeig2017}).

This is a dramatic claim! As Feintzeig and Fletcher note, WPSs are widely used in the literature,  and several authors explicitly use such structures as the basis for hidden variable theories (\cite{Feintzeig2017,Craig2006,Suppes1991}). Is a subjective approach to the probabilities in any such GHVT irrational, as Feintzeig and Fletcher seem to suggest?
We will not provide a complete assessment of this claim here; however, we wish to highlight that on one natural and plausible reconstruction, the soundness of Feintzeig and Fletcher's normative argument requires \emph{at least} a very specific approach to the `hidden variable' interpretation of a WPS.

To unpack this approach, it is helpful to consider what GHVTs (and their classical counterparts) \emph{could} be. As standardly construed in the literature, and by our lights, a hidden variable theory (e.g. a CHVT or a GHVT) should include at least the following schematic ingredients:
\begin{enumerate}[label=(\roman*)]
    \item A structure of \emph{events};
    \item \emph{Probabilities} of these events;
    \item \emph{States of affairs} and their associated \emph{response functions} (recall that a response function specifies the probability of an event conditional on the system's being in some particular state);
    \item A \emph{relationship} between the probabilities of events and the response functions.
\end{enumerate}
For instance, Abramsky and Brandenburger flesh out this schema for a CHVT as follows. Let $\mathcal{E}= \{ E \}$ be the set of events described in Section \ref{sec:gluing_formal} (the reader re-joining us from Section \ref{sec:con_in} may safely view $\mathcal{E}$ as a set without any additional structure)---this specifies (i). Let the Born rule specify the empirical probabilities $p(E \,|\, C)$, where $C\in\mathcal{M}$ is a maximal set of compatible measurements that can yield $E$---this specifies (ii). Now specify (iii) by introducing a (finite) set of states of affairs $\Lambda$. We take each $\lambda\in\Lambda$ to specify a response function $p (~\cdot~ \,|\, \lambda,C)$ for each $C\in\mathcal{M}$, which maps all events that can be measured in the specified context to weights in the unit interval. Finally, introduce functions $p(~\cdot~ ):\Lambda \to [0,1]$ such that $\sum_\lambda p(\lambda)= 1$, which describe the probabilities that the preparation of the system yields a given state of affairs. 
(iv) is then specified by requiring that the empirical probabilities are reproduced by means of the following functional relationship:
\begin{equation}\label{eq:spekkens}
    p (E\,|\,C) = \Sum_{\lambda\in \Lambda}  p(\lambda) p (E\,|\,\lambda,C).
\end{equation}
This approach allows for a transparent definition of `noncontextual', which is also a standard definition in the quantum foundations literature. The definition is that response functions are noncontextual just in case $p (E \,|\, \lambda,C) = p (E\,|\,\lambda,C')$ for all contexts $C,C'\in\mathcal{M}$ that can yield the measurement event $E$.\footnote{Note that this definition of `noncontextual' aligns with Spekkens's (\citeyear{Spekkens2005}) notion of `measurement noncontextuality' (albeit without his operational definition of `measurement contexts').}



How might GHVTs satisfy the above schema? At the very least, all the WPS-users mentioned above use the set $\Sigma$ of WPS events to represent the structure of measurement events, and they use the WPS `measure' $\mu$ to represent the empirical probabilities given by the Born rule (the usual probabilistic rule of quantum theory). 
This specifies (i) and (ii), and we take these choices to be uncontroversial. 

However, Feintzeig and Fletcher do not specify (iii) and (iv) (at least explicitly). We now provide a charitable reconstruction of how one might do so in order to construct a valid (Dutch) argument against the `coherence' of GHVTs.
Let us first recall the standard definition of Dutch Bookability generalized to WPSs (c.f. Section \ref{sec:convexity}) for the reader re-joining us from Section \ref{sec:con_in}. Let $Y$ be the sample space of a WPS, let $\Sigma$ (a subset of the power set of $Y$) be the set of WPS-events, and let $\mathcal{V}$ be some finite subset of $\Sigma$. Now, in the spirit of classical probability theory, define for each point $y$ in the sample space $Y$ an `elementary event' $V_y := \chi_{(~\cdot~)}(y) : \Sigma \to [0,1]$, where $\chi_{(A)}(y)$ is the characteristic function of the WPS-event $A$ evaluated at $y$. Now, we say a WPS set-function $\mu:\Sigma\to[0,1]$ violates a `No Dutch Books' constraint (or: is Dutch Bookable) just in case there is some set $\mathcal{V}$ and some function $s : \mathcal{V} \to \mathbb{R}$ specifying real-valued `stakes' such that
\beq\label{eq:DB2}
\text{for any $V_y \in \mathbb{V}$, } \sum_{A \in \V} s (A) \cdot (V_y (A)-\mu(A)) <0.
\eeq
\noindent
There is a standard `normative Dutch Book' story from formal epistemology whose aim is establish that Dutch Bookability is pathological (\cite{Hajek2008,deFinetti-ToP:1974}). This story interprets equation (\ref{eq:DB2}) as a betting game. Suppose a bookie proposes a game wherein each WPS event $A\in\mathcal{V}$ is valued at $s(A)$ dollars. If you agree to the game, then if $V_y$ describes the state of the world, you will receive (or be forced to pay the bookie, if $s(A)$ is negative) $s(A) \cdot V_y(A)$ dollars when the game ends. However, to play the game, you must put down (or agree to be paid by the bookie, if $s(A)$ is negative) $s(A) \cdot \mu(A)$ dollars. If $\mu(~\cdot~)$ truly reflects your expectations for the values of $V_y(~\cdot~)$ (the thought goes), then you should accept this game as `fair'. But then if equation (\ref{eq:DB2}) holds, you will accept as fair a series of bets on which you will lose money no matter what the world turns out to be like.

There are many reasonable philosophical worries about this story that we will not address here.\footnote{For some of these, see Kaplan (\citeyear{Kaplan1998}).} Instead, our concern is with the following point: in order for normative Dutch Books to make any sense for WPSs, one has to take seriously the idea that the elementary events of a WPS have something interesting to say about \emph{what the world is like}. What could this something be?
By our lights, the most reasonable answer is to take the elementary events of a WPS to be response functions, thereby specifying (iii). This move also gives a sense to Feintzeig and Fletcher's assertion that the relevant GHVTs are `non-contextual' (a term they do not define): these response functions are non-contextual in the sense defined above (just after (5.1)). 

The above specification of (iii) allows us to clearly lay out a certain subjective approach to specifying (iv), and to explain how a (Dutch) argument can be launched against it. The approach interprets the `empirical probabilities' $\mu$ as an agent's subjective expectations for the values of response functions. (iv) is then specified by an argument that the functions $\mu$ are all and only those expectations that satisfy a certain coherence condition, which is standardly taken to be the avoidance of Dutch Books (cf. de Finetti (\citeyear{deFinetti-ToP:1974})). In other words, a GHVT whose response functions are taken to be $V_y(~\cdot~)$ has a `coherent' $\mu$ just in case constraint (5.2) is satisfied. On this basis, we can reconstruct Feintzeig and Fletcher's original (under-specified) argument as saying that, given (i)--(iv), (Dutch) shows that the relevant GHVT is `incoherent'.  

We now emphasize two morals concerning this reconstruction of the normative (Dutch) argument against GHVTs. First, Feintzeig and Fletcher (\citeyear{Feintzeig2017}) motivate GHVTs by suggesting (\textit{\`{a} la} \cite{Pitowsky1989}) that they can escape one (Standard) argument---Bell's theorem---by weakening the strictures of classical probability theory. But our reconstruction requires specifying NC, OD response functions for GHVTs. Any HVT with such response functions is `factorizable'---very roughly, joint distributions for several outcomes in one context factor as the product of the probabilities of each outcome. And by Fine's theorem (\citeyear{Fine1982}), if quantum probabilities witness a violation of Bell inequalities, then there \emph{cannot be} such an HVT---so long as the quantum probabilities ought to be reproduced by equation (\ref{eq:spekkens}).\footnote{See Kunjwal (\citeyear{Kunjwal2015}) for a lucid discussion of this point in the ontological models framework.} It was noted in Section \ref{sec:gluing_formal} that the non-existence of such an HVT is an example of the third and lowest tier of AB's hierarchy of contextuality. The more general moral, then, is that GHVTs with NC, OD response functions are susceptible to (Standard)---and \textit{thereby} (Dutch)---arguments; thus, (Dutch) arguments simply re-express (Standard) arguments within the language of the WPS framework.

Second, while Feintzeig and Fletcher seem to suggest that a subjective approach to the probabilities of \emph{any} GHVT (representing a Kochen-Specker model) is irrational because it is Dutch Bookable, this consequence can be avoided by any WPS-user who divests elementary events of physical meaning and specifies response functions in some other way. For example: if response functions are given by wavefunctions in a finite-dimensional Hilbert space, then all and only the probabilities given by density matrices avoid Dutch Books.\footnote{See Steeger (\citeyear{steeger/online}) for a demonstration of this point, as well as a generalization of it, in the $C^*$-algebraic framework.}

We conclude by noting that once enough of the mathematical features of a WPS (such as `elementary events') are stripped of physical meaning, one may well begin to question the usefulness and interest of WPSs as the backbone of a hidden variable theory. 
Our own view is that WPSs are primarily of interest because they are a quasi-probabilistic framework within which we can represent quantum (and non-quantum) models and classify different strengths of contextuality, analogous to Abramsky and Brandenburger's sheaf-theoretic framework---this is the sense in which our results about the hierarchy of Dutch Bookability are of foundational significance.
But such structures can be interesting while still failing to be useful for a hidden variables approach to quantum theory.

\section*{Acknowledgements}

We thank Samson Abramsky, Jeremy Butterfield, Bryan Roberts, Alex Pruss, and Alan H\'{a}jek for helpful discussions of this material. Nicholas Teh thanks audiences at the Rutgers Philosophy and Probability colloquium series. Jeremy Steeger thanks audiences in the Controlled Quantum Dynamics group at Imperial College London. Our work on this project was supported by NSF Grant \#1734155.

\renewcommand{\thesection}{A}
\section{Appendix}

Before presenting the proofs of our results, we include the following comment on the relationship between Feintzeig and Fletcher's \emph{weak hidden variable representations} and our \emph{WPS representations}. First, recall that Feintzeig and Fletcher directly construct representations of `quantum mechanical experiments', viz. triples $(\mathcal{H},\psi,\mathcal{O}_n)$ where $\mathcal{H}$ is a Hilbert space, $\psi\in \mathcal{H}$ is a normalized vector, and $\mathcal{O}_n = \{P_1, \mathellipsis, P_n \}$ is a finite set of projectors on $\mathcal{H}$. We also note that in (\cite{AB}), an empirical model $e$ is said to have a \emph{quantum representation} if there is some $\mathcal{O}_n = X$ whose maximal subsets of commuting elements form the elements of $\M$, $O=\{0,1\}$ (the spectrum of each $P$), and $e_C(s) = \langle \psi, P_s \psi \rangle$, where $P_s$ is the projector on $\mathcal{H}$ corresponding to the section $s\in \E(C)$. Thus, every quantum mechanical experiment is the quantum representation of the appropriate empirical model $e$.

In (\cite{Feintzeig2017}), a \emph{weak hidden variable representation} of a quantum mechanical experiment $(\mathcal{H},\psi,\mathcal{O}_n)$ is a weak probability space $(Y,\Sigma,\mu)$ and a map $E':\mathcal{O}_n \to \Sigma $ satisfying the following two conditions: $\mu (E'(P))= \langle \psi, P \psi \rangle $ for all $P\in \mathcal{O}_n$, and for all orthogonal $P_i,P_j \in\mathcal{O}_n $, $E'(P_i) \cap E'(P_j ) \in \Sigma $ and $\mu (E'(P_i) \cap E'(P_j )) = 0$. In order to derive a connection between these representations and additivity-violations, Feintzeig and Fletcher require that these spaces satisfy an analog of (WC): for every set $Q \subseteq \mathcal{O}_n$ of mutually orthogonal projectors spanning $\mathcal{H}$, and letting $\Sigma_Q$ be the algebra formed by $Q$, $\Sigma_Q \subseteq \Sigma$ and $(Y,\Sigma_Q,\mu |_Q)$ is a probability space.

Now let $e$ be the empirical model induced by $(\mathcal{H},\psi,\mathcal{O}_n)$. It is clear that by (WC), (EC), and (ME), any WPS representation $e^\dag$ will also be a weak hidden variable representation of $(\mathcal{H},\psi,\mathcal{O}_n)$ where $E'(P) = E( P \mapsto 1 )$ for all $P\in\mathcal{O}_n$. However, not every weak hidden variable representation of $(\mathcal{H},\psi,\mathcal{O}_n)$ will be a WPS representation of $e$; the former are blind to the structure of the event sheaf and so need not satisfy condition (\ref{eq:setglue}). We note that condition (\ref{eq:setglue}) imposes the minimal additional structure needed in order to derive interesting connections between sheaf-theoretic inconsistency and violations of additivity; these are established by Theorem \ref{thm:main}.

We now present the proofs of this theorem and our two propositions. We begin by reviewing some notation.
First, for AB-event sections, we will use a subscript to keep track of the set $U \subseteq X$ over which a section lies, i.e. $s_U \in \E(U)$ (where $X$ is the measurement set of an empirical scenario). For instance, $s_X$ will denote a particular global section.
Second, we will use $S_U := \Ebar (s_U)$ to simplify our notation for representations of AB-events (as sets within the WPS sample space $Y$); we will also use $\E^{\dag} (U) := \{ \Ebar (s_U ) : s_U \in \E (U) \}$ to denote the set of representations of AB-events over $U$.
Third, in what follows, we will always use $C$ and $C'$ to refer to \textit{maximal} contexts, i.e. $C,C' \in \M$.

In order to relate the different tiers of contextuality to properties of our WPS representations, we will want to excise the following two collections of sets from $Y$:
\beq
    D_1 := \left \{ E(s) \cap E(s')  :   s, s' \in \E(x) \text{ for some } x\in X \right \}.
\eeq

\beq
    D_2 := \left \{Y - \bigcup_{s\in \E(x) } E(s) :  x\in X \right \}.
\eeq
$D_1$ represents `contradictory events', whereas $D_2$ represents `measurements with non-existent outcomes'; neither of these types of `events' is represented within the AB framework, and by (WC),(EC) and (ME), every set in these collections is a measure-zero $\Sigma$-event.
Thus, it is the result of the excision $Z := Y - \bigcup(D_1\cup D_2)$ which will contain the properties of interest to us.
In the course of our analysis, we will frequently have reason to consider the intersection of $Z$ with some set $B \in P(Y)$; we thus introduce the notation:
\beq
\tilde{B} := Z \cap B
\eeq

Let $e$ be an empirical model, and let $e^\dag$ be any one of its WPS reprsentations.
We will now prove a key lemma that will be used repeatedly to help translate the `contextuality' of $e$ into the properties of $e^\dag$. 
\begin{lemma}\label{lem:notnull}
If $z\in Z := Y - \bigcup(D_1\cup D_2)$, then $z\in \Ebar(s_X)$ for some $s_X \in \E(X)$.
\end{lemma}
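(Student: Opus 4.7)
The approach is to use the definition of $Z$ to extract a unique global section $s_X \in \E(X)$ of the event sheaf that is \emph{witnessed} by $z$, and then to apply the sheaf-like property (\ref{eq:setglue}) of $\Ebar$ to conclude that $z$ lies in $\Ebar(s_X)$.

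First, I would show that, for each single measurement $x \in X$, there is a unique $s_x \in \E(\{x\})$ satisfying $z \in \Ebar(s_x)$. Existence follows from $z \notin \bigcup D_2$: since $Y - \bigcup_{s \in \E(\{x\})} \Ebar(s)$ lies in $D_2$, the point $z$ must belong to $\Ebar(s)$ for at least one $s \in \E(\{x\})$. Uniqueness follows from $z \notin \bigcup D_1$: for any two distinct $s, s' \in \E(\{x\})$, the set $\Ebar(s) \cap \Ebar(s')$ lies in $D_1$, so $z$ cannot lie in both.

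Next, I would define a candidate global AB-section $s_X \in \E(X) = O^X$ pointwise by $s_X(x) := s_x(x)$, so that $s_X|_{\{x\}} = s_x$ for every $x \in X$. The family of singleton sections $\{s_x\}_{x \in X}$ is trivially compatible (the singletons $\{x\}$ pairwise meet only in $\emptyset$), so $s_X$ is the unique AB-section whose singleton restrictions are the $s_x$. I would then invoke (\ref{eq:setglue}), together with the dual gluing statement recalled in Section \ref{sec:dutch_formal}---any compatible family $\{S_i \in \E^\dag(U_i)\}$ whose $U_i$ cover $X$ has a unique intersection equal to the representation of the glued AB-section---to obtain $\Ebar(s_X) = \bigcap_{x \in X} \Ebar(s_x)$. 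Since $z \in \Ebar(s_x)$ for every $x \in X$ by the previous step, we conclude $z \in \bigcap_{x \in X} \Ebar(s_x) = \Ebar(s_X)$, as required.

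The delicate point is the final identification $\Ebar(s_X) = \bigcap_{x \in X} \Ebar(s_x)$. The inclusion $\Ebar(s_X) \subseteq \bigcap_{x \in X} \Ebar(s_x)$ is immediate from the monotonicity of $\Ebar$ under restriction (the duality $S_U \supset S_{U'}$ for $U \subset U'$ noted just after (\ref{eq:setglue})). The reverse inclusion is precisely the substantive gluing content of (\ref{eq:setglue}): $\Ebar$ sends the unique gluing of a compatible family of AB-sections to the intersection of their set-theoretic representations. Once this is granted, everything else in the argument is a direct unpacking of the definitions of $D_1$, $D_2$, and $Z$.
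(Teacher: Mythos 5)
Your proof is correct and follows essentially the same route as the paper's: use $D_2$ and $D_1$ to show that each $z \in Z$ lies in exactly one $\Ebar(s_x)$ for each $x \in X$, glue the resulting singleton sections into a global section $s_X$ via the sheaf property of $\E$, and apply (\ref{eq:setglue}) to identify $\Ebar(s_X)$ with $\bigcap_{x\in X}\Ebar(s_x)$. Your explicit observation that the substantive direction is the inclusion $\bigcap_{x\in X}\Ebar(s_x)\subseteq\Ebar(s_X)$ is, if anything, slightly more careful than the paper's phrasing of that final step.
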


\begin{proof}[\unskip\nopunct]
\textit{Proof.} 

We begin by arguing that for all $x \in X$, any $z \in Z$ is contained in exactly one $S_x \in \E^{\dag} (x)$. 
Let us define the union 
\[
A_x := \bigcup_{S_x \in \E^{\dag}(x)} S_x
\]
for an arbitrary $x \in X$.
Since $D_2$ contains $A_{x}^{c}$ and $D_2$ is disjoint from $Z$ (by the definition of $Z$), it is clear that $Z \cap A_{x}^{c} =\emptyset$, and so $A_x \supseteq Z$.
In other words, any $z \in Z$ is contained in the union $A_x$.
To further see that any $z \in Z$ is contained in exactly one $S_x$ out of this union, we simply note that since $D_1$ is disjoint from $Z$, 
\beq
\tilde{S}_x \cap \tilde{S'}_x = \emptyset ~~ \text{for any}~S,S' \in \E^{\dag}(x),
\eeq
where we recall that $\tilde{S}:= S \cap Z$.

We now know that for any $z\in Z$, there is a collection $\{ S_x \}_{x \in X}$ such that each $S_x$ contains $z$.
Since any collection of AB-events $\{s_x \}_{x \in X}$ defines a global section $s_X$ (by the sheaf property of $\E$), and the corresponding $S_X$ is in the intersection of $\{S_x \}$ (by Eq. (\ref{eq:setglue})), we conclude that $z \in Z$ is contained in $S_X$.



\end{proof}


We now use this lemma to prove Theorem 1. 


\begin{proof}[\unskip\nopunct]
\textit{Proof of Theorem 1.}

\vspace{\baselineskip}
\noindent
\emph{(1) $e$ strongly contextual $\Longrightarrow$ $e^\dag$ maximally violates subadditivity}

\vspace{\baselineskip}
In order to show the existence of a collection of sets that maximally violate subadditivity, it will be convenient to define $D_3$, the collection of measure-zero sets $S_C$ each of which contains some representation $S_X$ of a global section. More formally:
\[
D_3 := \{ S_C ~\text{corresponding to some}~ s_C : \mu(S_C)=0 ~\text{and}~ S_C \supset S_X ~\text{corresponding to some}~ s_X \}.
\]
Our strategy will be to use strong contextuality to show that the union of sets in $D_3$ contains $Z$, thus providing---in conjunction with $D_1$ and $D_2$---the collection of sets which maximally violates subadditivity.

We now translate `strong contextuality' into a feature of the collection $D_3$.
Recall that strong contextuality means that there are no global sections in the support of $e$; in other words, for any global section $s_X$, there exists a section $s_C:=s_{X}|_{C}$ such that $e_C (s_C)= 0$. 
Thus, by (EC) and condition (\ref{eq:setglue}), strong contextuality implies that any $S_X$ is contained in some measure-zero $S_C$; hence such an $S_C$ belongs in $D_3$. 
We can now combine this fact with Lemma \ref{lem:notnull} to deduce that---since any $z \in Z$ is contained in some $S_X$---strong contextuality implies that $\bigcup D_3 \supset Z$.

In light of this last WPS characterization of strong contextuality, we will take our violation-inducing collection to be $\V_3:= D_1 \cup D_2 \cup D_3$.
Since $Y = \bigcup \V_3$ and $\mu(Y)=1$ by (WC), we can immediately compute that $\mathfrak{a}(\V_3) = \mu ( \bigcup V_3 ) - \Sum_{A \in \V_3} \mu(A) = 1$, i.e. $\V_3$ yields a maximal violation of subadditivity. 

\vspace{\baselineskip}
\noindent
\emph{(2) $e$ logically contextual $\Longrightarrow$ $e^\dag$ violates subadditivity}

\vspace{\baselineskip}
Our strategy here is similar, viz. we will find a collection of sets that covers $Y$ and use this to show that the collection violates subadditivity.
However, our previous argument will need to be supplemented, because logical contextuality only guarantees the existence of \textit{one} section $s^* \in \E(C)$ in the support of $e$ which does not extend to a global section in the support of $e$.
Thus, we will only be able to show that the union of sets in $D_3$ contains a subset of $Z$; in order to construct a collection that completely contains $Z$, we will need to introduce a new collection that we call $D_4$. 

From the definition of logical contextuality, we know that for all global sections $s_X$ that restrict to $s^*$, there exists a distinct maximal context $C'$ such that the restriction $s_X|_{C'}=s_{C'}$ is not in the support of $e_{C'}$, i.e. $e_{C'} (s_{C'})=0$.
Thus, by (Empirical Consistency) and condition (\ref{eq:setglue}), $S_{C'}$ is a measure-zero set that contains any such $S_X$, i.e. it is an element of $D_3$.

Combining this fact with (\ref{eq:setglue}) and Lemma \ref{lem:notnull} now lets us argue that $\bigcup D_3$ contains $\tilde{S^*}:= S^* \cap Z$. 
To see this, we first note that by (\ref{eq:setglue}) and Lemma \ref{lem:notnull}, $S^* = \bigcup_{S_X : S_X|^{C}=S^*} S_X$.
And since we have just argued that any $S_X$ that extends to $S^*$ is contained in $S_{C'}$, it follows that $\bigcup D_3 \supset \tilde{S^*}$.

However, to complete our argument, we will need to generate a collection whose union contains all of $Z$. 
In order to do so, we first define another collection $D_4 := \{ S_C : S_C \neq S^* \}$ (i.e. sets that represent sections over $C$ which are not $s^*$), and note that by (EC), any set in $D_4$ must have measure less than one because---by hypothesis---$\mu (S^*) \neq 0$.
Supplementing $D_3$ with $D_4$ then allows us to construct the desired collection: $\bigcup (D_3 \cup D_4) \supset Z$, because by Lemma \ref{lem:notnull} and (\ref{eq:setglue}), any $z \in Z$ is contained in precisely one $S_C$.

We can now produce our subadditivity collection by defining $\V_4:= D_1 \cup D_2 \cup D_3 \cup D_4$, and noting that $Y= \bigcup \V_4$.
Since the sets in $D_1, D_2, D_3$ are all of measure zero and the sets in $D_4$ are of measure less than one, it follows that
$\mathfrak{a}(\V_4) = \mu ( \bigcup V_4 ) - \Sum_{A \in \V_3} \mu(A) > 0$.
Thus, $\V_4$ violates subadditivity. 
\\\\
\noindent
\emph{(3) $e$ probabilistically contextual $\Longrightarrow$ any monotonic extension of $e^\dag$ violates additivity}
\\\\
Note that if $e$ is possibilistically (i.e. either strongly or logically) contextual, then any monotonic extension of $e^\dag$ will violate subadditivity, and so it will also violate additivity. 
Thus, we will only need to consider the case in which $e$ is probabilistically, but not possibilistically, contextual.
In this case, $e^\dag$ satisfies subadditivity, and its monotonic extensions which are \emph{not} subadditive are not additive by default; thus, we need only consider the monotonic extensions of $e^\dag$ which satisfy subadditivity.
As such, we will now argue that if $e$ is probabilistically contextual, then any subadditive and monotonic extension of $e^\dag$ will contain a collection of sets $\V$ which violates additivity.

By probabilistic contextuality and (EC), there must exist some $S^* \in \E^\dag (C)$ such that $\mu(S^*)$ cannot be recovered by marginalization from any $\mu'$, i.e.
\beq\label{nomarg}
\mu (S^*) \neq \Sum_{\{S_X \in \E^{\dag}(C) :~ S_X|^{C} = S^*\}} \mu' (S_X).
\eeq

The non-marginalization equation (\ref{nomarg}) bears a resemblance to the additivity-violation condition; however, there is insufficient information to show that it is defined on a disjoint collection $\{S_X \}$ whose union yields $S^*$.
In order to obtain a related collection of sets that satisfies these two properties, we define $\V:= \{ \tilde{S}_X : S_{X}|^{C} = S^* \}$. 
We then observe that by Lemma \ref{lem:notnull}, this is a disjoint collection; and that by Lemma \ref{lem:notnull} and (\ref{eq:setglue}), $\bigcup \V = \tilde{S^*}$, which relates the collection to the LHS of (\ref{nomarg}).

In order to relate the measures of the sets used in (\ref{nomarg}) to  the measures of the sets in $\V$, we will need to establish two further facts, viz. $\mu' ( \tilde{S}_X)= \mu'(S_X)$ and $\mu' ( \bigcup \V) = \mu' (\tilde{S}^*) = \mu (S^*)$.
We do so by arguing that $\mu' ( \tilde{A})= \mu' (A)$ for any $A \in \Sigma'$. 
Let $\bar{A}:= A \cap Z^c $. To obtain the result, note that $\mu' (Z^c)=0$ by the subadditivity of $\mu'$ and so $\mu' (\bar{A})=0$ by the monotonicity of $\mu'$.
Then observe that by subadditivity, $\mu' (A) = \mu' ( \tilde{A} \cup \bar{A} ) \leq \mu' (\tilde{A})) + \mu' (\bar{A}) \leq \mu'( \tilde{A})$, and again by monotonicity, $\mu' (\tilde{A}) \leq \mu'(A)$.
Thus the result follows.

By means of the above facts, we can now use (\ref{nomarg}) to compute that $\mu' ( \bigcup \V) \neq \Sum_{A \in \V} \mu' (A)$, thus showing that $\V$ witnesses an additivity-violation for  $(Y, \Sigma', \mu')$.

\end{proof}

In preparation for the proofs of Propositions 1 and 2, we provide an illustration of a simple combinatorial WPS representation. Let $e$ be the empirical model of the Specker triangle, where $X=\{a, b, c\}$, $O=\{0,1\}$, and $\M = \{\{a,b\},\{b,c\},\{a,c\}\}$.\footnote{This is all the data we will need for our example, but see (\cite{Spekkens2011}) for the complete model.} A simple combinatorial WPS representation $e^{\dag}$ is obtained by letting the sample space $Y$ contain a unique point for each global section $s_X\in\E(X)$ and no additional points.\footnote{Note that the recipe for this simple WPS representation recovers the formalism of `experimental behaviors' used by (\cite{Dowker2014}).} This WPS can be graphically represented by the cube in Figure \ref{fig:combo}. In this figure, the points exhaust $Y$, edges represent events over maximal contexts, and faces represent atomic events. For example, the shaded face is $\Ebar(c\mapsto 0)=\{a_0b_0c_0,a_0b_1c_0,a_1b_0c_0,a_1b_1c_0\}$, and the thick edges represent the set $\{S : S \in \E^{\dag} (\{a,b\})\}$.\footnote{Feintzeig and Fletcher (\citeyear{Feintzeig2017}) discuss a variety of types of weak probability spaces that satisfy certain additional conditions, but not all of these will admit of such combinatorial models. The model illustrated, for example, fails to be a Generalized Probability Space (GPS) representation of the Specker triangle because there are disjoint $\Sigma$-events whose union is not a $\Sigma$-event.}

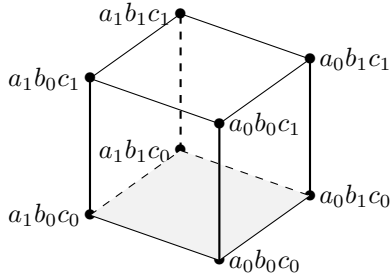
\begin{figure}[!htb]
    \centering
        \tdplotsetmaincoords{60}{125}
        \begin{tikzpicture}[tdplot_main_coords,scale=2.1]
        \pgfmathsetmacro{\factor}{1};
        \pgfmathsetmacro{\inter}{1.5};
        
        \coordinate [label=left:$a_1b_1c_1$] (a1b1c1) at (0*\factor,0*\factor,1*\factor);
        \coordinate [label=right:$a_0b_1c_1$] (a0b1c1) at (0*\factor,1*\factor,1*\factor);
        \coordinate [label=left:$a_1b_0c_1$] (a1b0c1) at (1*\factor,0*\factor,1*\factor);
        \coordinate [label=right:$a_0b_0c_1$] (a0b0c1) at (1*\factor,1*\factor,1*\factor);
        
        \coordinate [label=left:$a_1b_1c_0$] (a1b1c0) at (0*\factor,0*\factor,0);
        \coordinate [label=right:$a_0b_1c_0 $] (a0b1c0) at (0*\factor,1*\factor,0);
        \coordinate [label=left:$a_1b_0c_0 $] (a1b0c0) at (1*\factor,0*\factor,0);
        \coordinate [label=right:$a_0b_0c_0$] (a0b0c0) at (1*\factor,1*\factor,0);
        
        \foreach \Point in {(a0b0c0),(a0b0c1),(a0b1c0),(a0b1c1),(a1b0c0),(a1b0c1),(a1b1c0),(a1b1c1)}{
            \node at \Point {\textbullet};
        }
        
        
        \fill[gray!10]
        (a1b0c0)--(a1b1c0)--(a0b1c0)--(a0b0c0)--cycle;
        
        \draw[opacity=1] (a1b0c0)--(a0b0c0)--(a0b0c1)--(a1b0c1)--cycle;
        \draw[opacity=1] (a0b0c1)--(a0b1c1)--(a0b1c0)--(a0b0c0);
        \draw[opacity=1] (a1b0c1)--(a1b1c1);
        \draw[opacity=1] (a1b1c1)--(a0b1c1);
        \draw[dashed, opacity=1] (a1b0c0)--(a1b1c0);
        \draw[dashed, opacity=1] (a1b1c0)--(a0b1c0);
        \draw[dashed, thick, opacity=1] (a1b1c0)--(a1b1c1);
        
        \draw[thick] (a1b0c0)--(a1b0c1);
        \draw[thick] (a0b0c0)--(a0b0c1);
        \draw[thick] (a0b1c0)--(a0b1c1);
        \end{tikzpicture}
      \caption{A combinatorial WPS-representation of the Specker triangle.}
   
    \label{fig:combo}
\end{figure}

Before proving Proposition 1, we recall that $\V_{\M} := \{S_C : S_C \in \E^\dag(C), C\in \M  \}$ and that an `additive cover' is collection of disjoint sets $A \subset \Sigma$ that covers the sample space such that $\mathfrak{a}(A)=0$.

\begin{proof}[\unskip\nopunct]
\textit{Proof of Proposition 1.}

\vspace{\baselineskip}
\noindent
\emph{(1) $e$ strongly contextual $\Longleftrightarrow$ $e^\dag$ strongly $\V_{\M}$-violates subadditivity}

\vspace{\baselineskip}
Recall that $e^\dag$ strongly $\V_{\M}$-violates subadditivity iff the collection $V$ of measure-zero sets in $\V_{\M}$ covers \emph{all} the measure-nonzero elements of an additive cover $A\subseteq \V$. By (Exhaustiveness), we can set $A=\{S_C : S_C \in\E^\dag(C)\}$ for some $C\in\M$. It is clear that $e^\dag$ strongly $\V_{\M}$-violates subadditivity iff $\bigcup V = Y$. We will now show that this holds iff $e$ is strongly contextual.

Recall that strong contextuality means that there are no global sections in the support of $e$; in other words, for any global section $s_X$, there exists a section $s_C:=s_{X}|_{C}$ such that $e_C (s_C)= 0$.

Thus, by (EC), (\ref{eq:setglue}), and (Strong Mutual Exclusivity), every $S_X$ is contained in some measure-zero $S_C$ iff $e$ is strongly contextual. By (Exhaustiveness), $\bigcup_{S_X \in\E^\dag (X) } S_X = Y$, and by condition (\ref{eq:setglue}), no $S_X$ is trivial. Thus, $\bigcup V = Y$ iff $e$ is strongly contextual.

\vspace{\baselineskip}
\noindent
\emph{(2) $e$ logically contextual $\Longleftrightarrow$ $e^\dag$ logically $\V_{\M}$-violates subadditivity}

\vspace{\baselineskip}

Recall that $e^\dag$ \textit{logically $\V$-violates subadditivity} iff the collection $V$ of measure-zero sets in $\V\subseteq \Sigma$ covers \emph{some} non-measure-zero element $a$ of an additive cover $A\subseteq \V$ (and so $\mathfrak{a}(V\cup A\setminus \{a\} ) >0$).

Suppose $e$ is logically contextual; this means there is some section $s^* \in \E(C)$ for some $C\in \M$ which is in the support of $e$ but which does not extend to a global section in the support of $e$. By condition (\ref{eq:setglue}), for each global section $s_X$ that restricts to $s^*$, the associated WPS-event $S_X$ is contained in some $S_C \in V$. Again by (\ref{eq:setglue}), the union of all these $S_X$ is equal to $E(s^*)=S^*$; thus $\bigcup V$ contains $S^*$. Let $A= \{S_C : S_C \in \E^\dag(C) \} $ and note that $A$ covers $Y$ (by Exhaustiveness) and contains $S^*$. Thus, $(A\setminus \{S^*\})\cup V$ yields the desired logical violation of subadditivity, viz. $\mathfrak{a}(V\cup A\setminus \{S^*\} ) > 0$.

Now suppose $e$ is not logically contextual; this means \emph{all} sections $s_C$ in the support of $e$ extend to global sections $s_X$ in the support of $e$. By (\ref{eq:setglue}), any cover of a measure-nonzero $S_C$ must cover all of the global sections to which it extends; that is, it must cover all $S_X$ such that $S_X|^C=S_C$. Since at least one of these $S_X$ is in the support of $e^\dag$, by (EC) and (Strong Mutual Exclusivity), any cover using elements of $\V_{\M}$ must contain some measure-nonzero element. Thus no logical violation of subadditivity can be constructed.

\vspace{\baselineskip}
\noindent
\emph{(3) $e$ probabilistically contextual $\Longleftrightarrow$ $e^\dag$ $\V_{\M}$-violates additivity}
\vspace{\baselineskip}

Recall $e^\dag$ $\V_{\M}$-violates additivity iff the algebra generated by $\V_{\M}$ contains some collection $V$ of disjoint sets that violates additivity in any monotonic extension of $e^\dag$ (and so $\mathfrak{a}(V)\neq 0$ in these extensions).

First, note that $\E^\dag (X)$ is a collection of disjoint sets that generates the algebra $\Sigma'$ that is generated by $\V_{\M}$. Now note that if $e$ is \emph{not} probabilistically contextual, then (by definition) there exists some $e_X\in\D\E(X)$. This $e_X$ induces a monotonic extension $(Y,\Sigma',\mu')$ such that, by (EC) and (WC), all elements in $\Sigma'$ marginalize appropriately. That is: for any $S \in U$ for any $U\subseteq X$,
\beq
\mu (S) = \Sum_{\{S_X \in \E^{\dag}(X) :~ S_X|^{U} = S\}} \mu' (S_X).
\eeq
Since the sets in $\E^\dag (X)$ are all disjoint and generate the algebra, there is no collection of disjoint sets in this algebra that yield a violation of additivity. So there is an additive, monotonic extension of $e^{\dag}$.

Now suppose that $e$ is probabilistically contextual. By (EC), in any monotonic extension $(Y,\Sigma',\mu')$, there must exist some $S^* \in \E^\dag (C)$ such that $\mu(S^*)$ cannot be recovered by marginalization from $\mu'$. Let $V= \{S_X \in \E^{\dag}(X) :~ S_X|^{C} = S^*\}$, and note that $V$ is a collection of disjoint sets such that $\mathfrak{a}(V) \neq 0$.\footnote{These considerations are sufficient to demonstrate the stronger result that $e$ is probabilistically contextual if and only if \emph{any} (monotonic or non-monotonic) extension of $e^\dag$ contains a subset in $\Sigma'$ that violates additivity. We restrict our focus to monotonic extensions to emphasize the connection to Theorem \ref{thm:main}.}
\end{proof}

We now turn to Proposition 2. Its proof makes use of the following consequence of condition (\ref{eq:setglue}), (Strong Mutual Exclusivity), and (Exhaustiveness): letting $\mathbb{V}|_{\V_{\M}}:= \{V|_{\V_{\M}} : V\in \mathbb{V}\} $, there is a bijection
\beq
v:\E(X) \to \mathbb{V}|_{\V_{\M}} :: s_X \mapsto V^{s_X},
\eeq
where for $C\in \M$, $V^{s_X}(S_C)$ is 1 if $ S_X|^C = S_C$ and 0 otherwise. Additionally, if we let $\D_A\E(X)$ be the presheaf containing \emph{all} the distributions on X, then there is a bijection
\beq
d:\D_A\E(X) \to \text{Conv}(\mathbb{V}|_{S_{\M}}) :: e_X \mapsto \sum_{s_X \in \E(X)} e_X(s_X)\cdot v(s_X).
\eeq
By means of these two maps, we can demonstrate the equivalence of the hierarchy of contextuality and the hierarchy of convexity-violation.

\begin{proof}[\unskip\nopunct]
\textit{Proof of Proposition 2.}

\vspace{\baselineskip}
\noindent \emph{(1) $e$ strongly contextual $\ \Longleftrightarrow \ $ $e^{\dagger}$ strongly $\V_{\M}$-violates $\mathbb{V}$-convexity}

\vspace{\baselineskip}
$(\Longrightarrow)$ Suppose towards a contradiction that (\ref{eq:detfrag}) is satisfied for some $\mathbb{V}'$ and some $f$. Pick a $V\in\mathbb{V}'$; then $v^{-1}(V|_{\V_{\M}})=s_X$ must be in the support of $e$, which yields a contradiction.

$(\Longleftarrow)$ Suppose towards a contradiction that $s_X\in \E(X)$ is in the support of $e$. 
But then the set of $V\in\mathbb{V}$ which restricts to $v(s_X)$ satisfies (\ref{eq:detfrag}) for an appropriate choice of $f$, which yields a contradiction.

\vspace{\baselineskip}
\noindent \emph{(2) $e$ logically contextual $\ \Longleftrightarrow \ $ $e^{\dagger}$ logically $\V_{\M}$-violates $\mathbb{V}$-convexity}

\vspace{\baselineskip}
$(\Longrightarrow)$ Suppose that $\chi_\mu|_{\V_{\M}}$ is the logical sum of the elements of $\mathbb{V}'\subseteq \mathbb{V}$. Then every local section in the support of $e$ extends to a global section in the pre-image of $v$ over $\mathbb{V}'$, which yields a contradiction.

$(\Longleftarrow)$ Suppose that all local sections in the support of $e$ extend to global sections; then the logical sum of the image of $v$ over these global sections yields $\chi_\mu|_{\V_{\M}}$, which yields a contradiction.

\vspace{\baselineskip}
\noindent \emph{(3) $e$ probabilistically contextual $\ \Longleftrightarrow \ $ $e^{\dagger}$ $\V_{\M}$-violates $\mathbb{V}$-convexity}

\vspace{\baselineskip}
$(\Longrightarrow)$ Suppose otherwise, viz. suppose that some convex combination of $\mathbb{V}$ yields $\mu|_{\V_{\M}}$. It follows that $d^{-1}(\mu|_{\V_{\M}})$ is a distribution $e_X$ that marginalizes to each $e_C$, which yields a contradiction.

$(\Longleftarrow)$ Suppose otherwise, viz. suppose that there is some $e_X \in \D\E(X)$. Then $d(e_X)$ yields an element of $\text{Conv}(\mathbb{V})$ which is equal to $\mu|_{\V_{\M}}$; contradiction.

\end{proof}


\printbibliography

\end{document}